\theoremstyle{plain}
\newtheorem{theorem}{Theorem}[section]
\newtheorem{prop}[theorem]{Proposition}
\newtheorem{lemma}[theorem]{Lemma}
\theoremstyle{definition}
\newtheorem{definition}[theorem]{Definition}
\theoremstyle{remark}
\newtheorem{remark}[theorem]{Remark}
\newtheorem{example}{Example}[section]
\title{Stochastically Dominant Peer Prediction}
\author{%
Yichi Zhang \\
DIMACS, Rutgers University \\
\texttt{yz1636@dimacs.rutgers.edu} \\
\And
Shengwei Xu \\
University of Michigan, Ann Arbor\\
\texttt{shengwei@umich.edu} \\
\And
David Pennock \\
DIMACS, Rutgers University\\
\texttt{dpennock@dimacs.rutgers.edu} \\
\And
Grant Schoenebeck \\
University of Michigan, Ann Arbor\\
\texttt{schoeneb@umich.edu} \\
}
\begin{document}

\maketitle
\vspace{-4mm}

\begin{abstract}
Eliciting reliable human feedback is essential for many machine learning tasks, such as learning from noisy labels and aligning AI systems with human preferences. 
Peer prediction mechanisms incentivize truthful reporting without ground truth verification by scoring agents based on correlations with peers. 
Traditional mechanisms, which ensure that truth-telling maximizes the \textbf{expected scores} in equilibrium, can elicit honest information while assuming agents' utilities are \textbf{linear functions} of their scores. 
However, in practice, non-linear payment rules are usually preferred, or agents' utilities are inherently non-linear.

We propose \emph{stochastically dominant truthfulness (SD-truthfulness)} as a stronger guarantee: the score distribution of truth-telling stochastically dominates all other strategies, incentivizing truthful reporting for a wide range of monotone utility functions. 
Our first observation is that no existing peer prediction mechanism naturally satisfies this criterion without strong assumptions.
A simple solution\textemdash rounding scores into binary lotteries\textemdash can enforce SD-truthfulness, but often degrades \emph{sensitivity}, a key property related to fairness and statistical efficiency.
We demonstrate how a more careful application of rounding can better preserve sensitivity.
Furthermore, we introduce a new enforced agreement (EA) mechanism that is theoretically guaranteed to be SD-truthful in binary-signal settings and, under mild assumptions, empirically achieves the highest sensitivity among all known SD-truthful mechanisms.
\end{abstract}

\section{Introduction}


Information elicitation studies how to design mechanisms that incentivize honest and high-effort human feedback.
In this framework, a principal seeks reliable responses regarding a set of tasks from strategic agents who are motivated by the designed rewards.
For example, AI companies collect human preferences to align language models via RLHF \citep{ouyang2022training}, and MOOCs use peer grading to evaluate complex assignments at scale \cite{piech2013tuned}.
A key challenge is designing scoring mechanisms that accurately assess the informativeness of agents' reports.

Peer prediction provides an elegant solution for settings where ground truth is unavailable or costly to obtain, such as in subjective tasks \citep{miller2005eliciting}.
Instead of relying on external validation, peer prediction scores agents based on the correlation between their reports and those of their peers.
The scoring mechanism is designed to be \emph{truthful} so that everyone reporting truthfully forms a Bayesian Nash equilibrium, ensuring that truth-telling maximizes each agent's \textbf{expected score}.

A key assumption underlying the effectiveness of existing peer prediction mechanisms is that agents' utility is a \textbf{linear function} of their score. 
This assumption is reasonable in some cases, such as when agents are solely motivated by monetary payments, which can be straightforwardly designed as a linear transformation of the peer prediction scores.
However, in practice, non-linear payment rules such as threshold bonuses that reward workers only when performance exceeds a cutoff are often preferred for their budget efficiency and flexibility \citep{zhang2023higheffort}.
Moreover, agents' intrinsic utility functions are sometimes inherently non-linear. 
For instance, student graders care more about their final letter grades (e.g., A, B, C) than the real-valued numerical scores assigned by the mechanism. 
When agents' utilities are non-linear functions of their scores, truth-telling is no longer guaranteed to be an equilibrium strategy.

Motivated by this limitation, we introduce the concept of \emph{stochastic dominance-truthful (SD-truthful)} peer prediction mechanisms.
This stronger notion of truthfulness requires that an agent’s \textbf{score distribution}, when everyone reports truthfully, first-order stochastically dominates the score distribution under any unilateral deviation.
As a result, SD-truthful mechanisms incentivize truth-telling for agents with \textbf{any utility function that is monotone increasing with the score}.

\subsection{Our Results}

Our first observation is that \textbf{no existing peer prediction mechanism is naturally SD-truthful under general information structures}, i.e., under arbitrary correlations between agents’ signals.
We examine three widely studied mechanisms that compute scores as the average of multiple simple discrete scores, including \emph{Output Agreement (OA)}, \emph{Peer Truth Serum (PTS)} \citep{faltings2017pts}, and \emph{Correlated Agreement (CA)} \citep{shnayder2016informed}. 
Mechanisms with complex scoring rules\textemdash such as those based on continuous-valued estimates of mutual information \cite{kong2019information,kong2020dominantly,schoenebeck2020learning}\textemdash are unlikely to satisfy SD-truthfulness. 
Yet even these simple mechanisms fail to guarantee SD-truthfulness without strong assumptions on the underlying information structure.

Our second observation is that \textbf{achieving SD-truthfulness is easy in principle but challenging to do well in practice}.
Given any truthful mechanism $\mathcal{M}$ with bounded scores $S^{\mathcal{M}} \in [S_{\inf}, S_{\sup}]$, we can define a probability by normalizing the score: $\lambda^\mathcal{M}=\frac{S^{\mathcal{M}}-S_{\inf}}{S_{\sup}-S_{\inf}}\in [0,1]$.
We then score agents via a binary lottery: $\tilde{S}^{\mathcal{M}}=1$ with probability $\lambda^{\mathcal{M}}$ and 0 otherwise.
Although this approach, called the \emph{direct rounding reduction}, trivially satisfies SD-truthfulness, it greatly reduces \emph{sensitivity}\textemdash an important property related to the efficiency and ex-post fairness of a mechanism.
At a high level, sensitivity measures how much a mechanism's score responds to changes in the information contained in an agent's reports, normalized by the standard deviation of the scores.

To overcome this limitation, \textbf{we propose the \emph{partition rounding reduction}, which helps partially recover the sensitivity of the original mechanism}.
The idea is to divide the questions into $K$ disjoint subsets and apply the original mechanism separately within each subset to compute an individual score.
The final score for an agent is then the average of $K$ i.i.d.~individual scores after applying direct rounding.
Intuitively, this approach improves sensitivity by reducing variance through the aggregation of multiple independent scores.

However, the partition rounding process cannot fully recover the sensitivity of the original mechanisms,
implying additional room for more sensitive peer prediction mechanisms.
\textbf{We introduce a novel mechanism called \emph{enforced agreement (EA)} that is shown to be the most sensitive SD-truthful mechanism in the binary-signal setting}.
EA aims to replicate the key advantage of OA\textemdash its high sensitivity\textemdash while relaxing the strong assumptions on the information structure required for SD-truthfulness.

EA works by enforcing a pre-determined empirical distribution of signals.
In the binary setting, if Alice reports 0 on $m_0 > n_0$ questions (where $n_0$ is the enforced count), the mechanism randomly flips $m_0 - n_0$ of them to 1.
This trick removes agents' incentive to over-report majority signals, a strategy that undermines the truthful properties of OA.
However, EA is only truthful but not SD-truthful when signals are non-binary, highlighting an important future challenge in designing high-sensitivity SD-truthful mechanisms for more complex settings.

Lastly, our empirical results reinforce and complement our theoretical insights.
In the binary-signal setting, we show that even a prior-free implementation of EA\textemdash simply enforcing a uniform distribution\textemdash outperforms other SD-truthful mechanisms in most settings, including those that require prior knowledge.
For non-binary-signal settings, the partition rounding reduction of PTS emerges as the most sensitive SD-truthful mechanism.

\Cref{tab:comparision} summarizes a comparison of SD-truthful (implementations of) peer prediction mechanisms with respect to the assumptions they require and their sensitivities.\footnote{MA is an extension of CA, and therefore shares similar requirements and properties.}

\begin{table}[h!]
\centering
\begin{adjustbox}{width=\linewidth}
\begin{tabular}{|>{\centering\arraybackslash}m{7cm}|>{\centering\arraybackslash}m{6cm}|}
\hline
\textbf{Mechanisms} & \textbf{Requirements for SD-truthfulness} \\ \hline
Output Agreement (OA)  & Self-dominating signal  \\ \hline
Enforced Agreement (EA) [This paper] & Binary and self-predicting signal \\ \hline
Peer Truth Serum (PTS) \citep{faltings2017pts} & Self-predicting signal \\ \hline
Correlated Agreement (CA)\citep{shnayder2016informed} & Any signal \\ \hline
Matching Agreement (MA) \citep{zhang2023omni} & Any signal \\ \hline
\end{tabular}
\end{adjustbox}
\caption{A Comparison of Discussed Peer Prediction Mechanisms ordered by sensitivity (from high to low), where PTS, CA, and MA require the partition rounding reduction to achieve SD-truthfulness.}\label{tab:comparision}
\end{table}
\vspace{-4mm}

\section{Related Work}

The idea of peer prediction was proposed by \citet{miller2005eliciting}. As the first discussion, their mechanism is not \emph{detail-free}, i.e.~the mechanism requires prior knowledge of the information structure (how agents' signals are correlated) to obtain truthfulness.
There are two ideas to mitigate this limitation: (i) the single-task mechanism, which elicits the information structure directly from agents \citep{prelec2004bayesian}; and (ii) the multi-task mechanism, which assumes each agent responds to multiple i.i.d.~tasks but only requires eliciting signals rather than predictions \citep{dasgupta2013crowdsourced}.

Our study is primarily relevant to the multi-task setting.
We categorize multi-task peer prediction mechanisms into two types: peer agreement mechanisms \cite{dasgupta2013crowdsourced, faltings2017pts, shnayder2016informed, zhang2023omni, Agarwal2017-ty} and mutual information mechanisms \cite{kong2019information, kong2020dominantly, schoenebeck2020learning, rowdycrowds, Kong_vmi_24}.
In this paper, we primarily focus on peer agreement mechanisms, as SD-truthfulness requires the mechanism’s score to take a relatively simple form.
Intuitively, this is because a complex scoring mechanism provides more opportunities for a cheating strategy to obtain one of the highest scores with a larger probability than truth-telling, which breaks SD-truthfulness.
In \cref{app:multual_info_SD}, we provide a more detailed discussion on why several classic mutual information mechanisms are not SD-truthful.

The score of a peer agreement mechanism can be viewed as an average of multiple individual scores, where each individual score takes values from a small finite space, e.g.~$\{0,1\}$.
For example, output agreement (OA) scores an agent based on the average number of questions that they agree with another agent.
The first multi-task peer prediction mechanism lies in this category, which can achieve truthfulness in a binary-signal setting \citep{dasgupta2013crowdsourced}.
The correlated agreement (CA) mechanism generalizes their idea to any finite-signal setting \citep{shnayder2016informed}.
The peer truth serum (PTS) mechanism is a generalization of OA, which relaxes the self-dominating assumption to self-prediction.
In addition, two follow-ups of CA aim to handle heterogeneous agents \citep{Agarwal2017-ty} and more general cheating strategies \citep{zhang2023omni}.

Moreover, we note that the idea of rounding peer prediction scores into a binary lottery (see \cref{sec:rounding}) was previously proposed by \citet{miller2005eliciting} to address risk-averse agents.
What is new in our work is the comparative analysis of various rounding reductions based on sensitivity, allowing us to identify a more effective rounding approach.
Furthermore, we formally prove that this idea extends beyond risk-averse agents and can be applied to any agents with increasing utility functions.

\section{Model}
\label{sec:model}

We consider the multi-task peer prediction setting where two agents, Alice and Bob, answer the same $n$ i.i.d.~questions.
For each question $j$, the agents receive discrete signals $X_{a,j}, X_{b,j}\in\Sigma=\{0,\ldots,c-1\}$, drawn from a fixed joint distribution $\Pr(X_a, X_b)$.
In cases with more than two agents, we can reduce to the case of two agents by, for any particular agent ``Alice'',  choosing a random peer agent ``Bob'' for scoring.

Each agent $i\in \{a,b\}$ applies a strategy $\theta_i\in\Theta$, which maps from signals to distributions over the same space $\Delta_\Sigma$.
The report on question $j$ is denoted as $\hat{X}_{i,j} = \theta_i(X_{i,j})$.
We highlight two types of strategies: the truth-telling strategy $\tau$, which always reports the received signal, and the uninformative strategy $\mu$, where reports are independent of the input.
Examples of uninformative strategies include always reporting the same signal or randomly reporting signals according to the prior.

A peer prediction mechanism $\mathcal{M}$ maps vectors of reports $\hat{X}_a$ and $\hat{X}_b$ to a score for Alice.
Given an information structure (i.e., the joint distribution over signals) and a mechanism, the score is a random function of strategies, denoted as $S^\mathcal{M}(\theta_a, \theta_b)$.
The randomness arises from the signals, strategies, and the mechanism itself.
When it is clear from context, we omit the superscript $\mathcal{M}$.


\subsection{Truthful Guarantees}
Existing peer prediction mechanisms focus on expected scores, guaranteeing that any deviation from truth-telling will only reduce the expected score.
In this sense, truth-telling forms a Bayesian Nash Equilibrium.
Since our analysis centers on equilibrium behavior, we assume Bob always tells the truth. 
Under this assumption, the score assigned by the mechanism reduces to a function of Alice’s strategy $\theta$, written as $S^\mathcal{M}(\theta)$.

\begin{definition}\label{def:truthful}
    A peer prediction mechanism is truthful if $\mathbb{E}[S(\tau)] \ge \mathbb{E}[S(\theta)]$ for any $\theta\in \Theta$.
\end{definition}

A truthful mechanism guarantees that if agents have linear utility in their scores, truth-telling is a Bayesian Nash Equilibrium\textemdash any deviation yields lower expected utility.
However, as we argued earlier, assuming linear utility is often undesirable and, in some cases, unrealistic.
Instead, we aim to design truthful mechanisms for a broader and more reasonable class of utility functions\textemdash those that are simply increasing in score.
This motivation naturally leads to the concept of stochastic dominance.

\begin{definition}\label{def:SDT}
    A peer prediction mechanism is {stochastic dominance-truthful (SD-truthful)} if the distribution of $S(\tau)$ first-order stochastic dominates (FOSD) the distribution of $S(\theta)$ for any $\theta\in \Theta$, i.e.~$\Pr(S(\tau) \ge t) \ge \Pr(S(\theta) \ge t)$ for any $t\in\mathbb{R}$.
\end{definition}

A standard result in microeconomic theory suggests that $S(\tau)$ FOSD $S(\theta)$ if and only if $\mathbb{E}[u(S(\tau))]\ge \mathbb{E}[u(S(\theta))]$ for every increasing utility function $u$ \citep{Hadar_FOSD}.
Since linear utility functions are also increasing, every SD-truthful mechanism is also truthful.

\subsection{Sensitivity}


In practice, an effective scoring mechanism should meaningfully evaluate the information contained in agents’ reports and reward more informative reports with higher scores.
This motivates an additional design criterion, orthogonal to truthfulness, known as \emph{sensitivity} \citep{zhang2023higheffort}.

To motivate the concept of sensitivity, we extend the model to incorporate agents’ effort levels.
This extension captures how the mechanism responds to marginal changes in information at equilibrium, i.e.~how the score varies when all other agents exert full effort while one agent adjusts her effort.
In particular, on each question, we assume Alice exerts effort with probability $e\in [0,1]$, while Bob always exerts effort and reports truthfully.
When Alice exerts effort, the signal pair $(X_a, X_b)$ is drawn from the joint distribution $\Pr(X_a, X_b)$; in the extreme case where Alice exerts no effort, the signals are independently drawn from the marginals $\Pr(X_a) \Pr(X_b)$. 
That is, exerting effort produces a signal $X_a$ that is correlated with $X_b$ according to the joint distribution, while shirking yields an uninformative guess drawn from the prior $\Pr(X_a)$.
After observing the signal, Alice will report truthfully.
Then, Alice's score under mechanism $\mathcal{M}$ becomes a function of $e$, denoted as $S^{\mathcal{M}}(e)$.

\begin{definition}\label{def:sensitivity}
    The sensitivity of a peer prediction mechanism $\mathcal{M}$ at effort level $e$ is
    $\delta^{\mathcal{M}}(e)\coloneqq  \frac{\nabla \mathbb{E}[S^{\mathcal{M}}(e)]}{\text{std}\left(S^{\mathcal{M}}(e)\right)}$
    where $\nabla$ denotes the derivative operator and $\text{std}$ denotes the standard deviation.
\end{definition}

Sensitivity measures how well a mechanism’s expected score responds to changes in effort\textemdash and thus, changes in information.
Consequently, a trivial mechanism that always assigns a score of 1, though truthful, has zero sensitivity, as its output is unaffected by effort.

Prior work~\citep{zhang2023higheffort} shows that when the score is normally distributed and agents are rewarded based on the ranking of scores, sensitivity is a sufficient statistic for \emph{budgetary efficiency}: mechanisms with higher sensitivity can elicit the same equilibrium effort level at a lower budgetary cost.
Moreover, \citet{xu2024spot} establish a one-to-one correspondence between sensitivity and \emph{measurement integrity} \citep{burrell2023measurement}, implying that mechanisms with higher sensitivity generate scores more closely aligned with the true quality of responses.
Together, these findings highlight sensitivity as a practically meaningful and theoretically grounded property of peer prediction mechanisms, motivating our choice to use it as an additional dimension for comparison.


\textbf{Our goal is to design SD-truthful mechanisms that also achieve high sensitivity.}

\section{Prior Mechanisms}
\label{sec:prior}

To the best of our knowledge, no existing peer prediction mechanism naturally achieves SD-truthfulness without imposing strong assumptions on the information structure.
In this section, we introduce three exemplary multi-task peer prediction mechanisms and explain why they are not naturally SD-truthful to motivate our methods. 


\subsection{Output Agreement (OA)}
\label{subsec:oa}


The \emph{output agreement (OA)} mechanism scores Alice based on the fraction of questions on which both agents agree.
The final score of OA is the average of $n$ independent binary individual scores, i.e.~$S^{OA} = \frac{1}{n} \sum_{i\in [n]}S_i^{OA}$ where $S_i^{OA}=1$ if both agents' reports agree on question $i$ and 0 otherwise.
However, OA encourages over-reporting of the majority signal, which is truthful only under a strong assumption.


\begin{definition}\label{def:sd}
    Agents' signals are self-dominating if $\Pr(X_b = \sigma | X_a = \sigma) > \Pr(X_b = \sigma' | X_a = \sigma)$ for any $\sigma'\neq \sigma\in \Sigma$.
\end{definition}

\begin{prop}
    OA is truthful if agents' signals are self-dominating.
\end{prop}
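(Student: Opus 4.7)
The plan is to prove truthfulness by analyzing the expected individual score conditional on Alice's observed signal, then averaging over the signal and over questions. Since the final score $S^{OA}$ is the average of $n$ identically distributed individual scores (one per sampled question) and the $n$ questions are i.i.d., linearity of expectation reduces the problem to showing $\mathbb{E}[S_i^{OA}(\tau)] \geq \mathbb{E}[S_i^{OA}(\theta)]$ for a single question and any strategy $\theta$.

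First I would fix an arbitrary signal realization $X_a = \sigma$ and compute Alice's expected individual score given her strategy. Under truth-telling she reports $\sigma$ and Bob (who is truthful throughout, as the paper assumes in the equilibrium analysis) reports $X_b$, so the expected score conditional on $\sigma$ is exactly $\Pr(X_b = \sigma \mid X_a = \sigma)$. Under a general strategy $\theta$, Alice's report is drawn from $\theta(\cdot \mid \sigma) \in \Delta_\Sigma$, so her conditional expected score becomes the weighted average $\sum_{\sigma' \in \Sigma} \theta(\sigma' \mid \sigma)\, \Pr(X_b = \sigma' \mid X_a = \sigma)$.

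Next I would invoke the self-dominance assumption (\Cref{def:sd}), which says that $\Pr(X_b = \sigma \mid X_a = \sigma)$ is the strictly largest entry in the row $\Pr(X_b = \cdot \mid X_a = \sigma)$. Any convex combination of entries in this row is therefore at most the maximum entry $\Pr(X_b = \sigma \mid X_a = \sigma)$, which immediately yields $\mathbb{E}[S_i^{OA}(\tau) \mid X_a = \sigma] \geq \mathbb{E}[S_i^{OA}(\theta) \mid X_a = \sigma]$ for every $\sigma$. Taking the expectation over $X_a \sim \Pr(X_a)$ preserves the inequality, giving $\mathbb{E}[S_i^{OA}(\tau)] \geq \mathbb{E}[S_i^{OA}(\theta)]$, and averaging over the $n$ questions (or equivalently, using that the per-question argument is identical) yields $\mathbb{E}[S^{OA}(\tau)] \geq \mathbb{E}[S^{OA}(\theta)]$, which is exactly \Cref{def:truthful}.

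There is no real obstacle here: the proof is essentially a one-line convex-combination bound applied row-by-row to the conditional distribution matrix $\Pr(X_b \mid X_a)$. The only thing to be careful about is that the strategy space $\Theta$ allows mixed reports, so one must phrase the argument in terms of convex combinations rather than deterministic relabellings, and remember that since each question is scored independently and identically, the per-question bound lifts to the averaged final score with no additional work.
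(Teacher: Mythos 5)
Your proof is correct and is exactly the argument the paper has in mind (the paper only gives the intuitive version of this reasoning in prose after the proposition, without a formal proof): condition on $X_a = \sigma$, observe that the row-maximum of $\Pr(X_b = \cdot \mid X_a = \sigma)$ is at $\sigma$ by self-dominance, bound any mixed report by a convex-combination argument, then take expectations over $X_a$ and average over the i.i.d.\ questions. Nothing is missing.
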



In fact, OA is also SD-truthful when signals are self-dominating. 
This is because each individual score $S^{OA}_i$ takes only two values, meaning that $S^{OA}_i(\tau)$ FOSD $S^{OA}_i(\theta)$ if and only if $\Pr(S^{OA}_i(\tau)=1)\ge \Pr(S^{OA}_i(\theta)=1)$.
Furthermore, the average of multiple i.i.d.~scores that are each SD-truthful remains SD-truthful (see \cref{subsec:partition}).
Consequently, for OA, SD-truthfulness reduces to truthfulness.


\subsection{Peer Truth Serum (PTS)}
\label{subsec:pts}

Self-dominance is usually considered a strong assumption.
\citep{radanovic_pts_2016} propose the \emph{Peer Truth Serum (PTS)} which incentivizes truth-telling under a weaker assumption.\footnote{In the binary signal setting, self-prediction is equivalent to positive correlation, which is a strictly weaker condition than self-dominance.}

\begin{definition}\label{def:sp}
    Agents' signals are self-predicting if $\Pr(X_b = \sigma | X_a = \sigma) > \Pr(X_b = \sigma | X_a = \sigma')$ for any $\sigma'\neq \sigma\in \Sigma$.
\end{definition}

PTS assumes a symmetric, publicly known prior $R(\sigma)=\Pr(X_a=\sigma)=\Pr(X_b=\sigma)$.
It repeatedly samples questions without replacement, and assigns an individual score $S^{PTS}_i=1/R(\sigma)$ to Alice if $\hat{X}_{a,i}=\hat{X}_{b,i}=\sigma$, and 0 otherwise.
The final score of PTS is the average of all $S^{PTS}_i$.

\begin{prop}[\citep{faltings2017pts}]
    PTS is truthful if agents' signals are self-predicting.
\end{prop}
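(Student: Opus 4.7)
The plan is to reduce the question to a pointwise best-response calculation for Alice, conditional on her realized signal, and then aggregate. Assume Bob reports truthfully, fix one sampled question, and condition on Alice's signal being $X_a = \sigma'$. Since Bob's signal and any randomness in Alice's strategy $\theta$ are conditionally independent given $\sigma'$, the expected individual PTS score takes the form
$$\mathbb{E}\bigl[S_i^{PTS}(\theta) \,\big|\, X_a = \sigma'\bigr] \;=\; \sum_{\sigma \in \Sigma} \theta(\sigma')(\sigma)\cdot \frac{\Pr(X_b = \sigma \mid X_a = \sigma')}{R(\sigma)},$$
where $\theta(\sigma')(\sigma)$ is the probability that $\theta$ maps the observed $\sigma'$ to a report of $\sigma$. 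This is a convex combination, so it is maximized by placing all mass on whichever $\sigma$ maximizes the ratio $\Pr(X_b = \sigma \mid X_a = \sigma')/R(\sigma)$.

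Next I would show that truth-telling $\sigma = \sigma'$ is exactly this unique maximizer under self-prediction. Using Bayes' rule together with the joint exchangeability implicit in the symmetric common-prior assumption,
$$\frac{\Pr(X_b = \sigma \mid X_a = \sigma')}{R(\sigma)} \;=\; \frac{\Pr(X_a = \sigma',\, X_b = \sigma)}{R(\sigma')\,R(\sigma)} \;=\; \frac{\Pr(X_b = \sigma' \mid X_a = \sigma)}{R(\sigma')}.$$
Thus ``$\sigma = \sigma'$ is the unique argmax'' becomes $\Pr(X_b = \sigma' \mid X_a = \sigma') > \Pr(X_b = \sigma' \mid X_a = \sigma)$ for every $\sigma \neq \sigma'$, which is precisely the self-prediction condition (after relabeling $\sigma \leftrightarrow \sigma'$).

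Finally I would aggregate: taking the expectation over $X_a \sim R$ yields $\mathbb{E}[S_i^{PTS}(\tau)] \geq \mathbb{E}[S_i^{PTS}(\theta)]$, and since the final PTS score is the average of such individual scores, linearity of expectation delivers $\mathbb{E}[S^{PTS}(\tau)] \geq \mathbb{E}[S^{PTS}(\theta)]$.

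The main obstacle is the translation in the second step. The self-prediction hypothesis is stated as an inequality on the raw conditional probabilities $\Pr(X_b = \sigma \mid X_a = \cdot)$, while the quantity that actually drives PTS's payoffs is the reweighted ratio $\Pr(X_b = \sigma \mid X_a = \sigma')/R(\sigma)$. Bridging the two requires invoking exchangeability of the joint distribution, and this should be flagged explicitly since self-prediction on its own, without the symmetry of the joint, does not directly imply the needed inequality. Once that bridge is established, the remainder is a one-line convex-combination argument followed by linearity of expectation.
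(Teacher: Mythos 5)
The paper does not prove this proposition; it is imported from \citet{faltings2017pts} as a cited result, so there is no paper proof to compare against. On its own merits, your argument is correct and is the standard one: condition on Alice's realized signal $\sigma'$, write her conditional expected individual score as the convex combination $\sum_\sigma \theta(\sigma')(\sigma)\,\Pr(X_b=\sigma\mid X_a=\sigma')/R(\sigma)$, observe that the best response places all mass on the argmax of the ratio, and then take expectations over $X_a$ and average over questions by linearity.

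Your flag about the bridge between raw self-prediction and the reweighted ratio is not a cosmetic caveat --- it is load-bearing, and you are right to make it explicit. The paper's \cref{def:sp} compares $\Pr(X_b=\sigma\mid X_a=\cdot)$ column-wise (fixing what Bob reports and varying Alice's conditioning signal), while PTS's per-signal best response is governed by $\Pr(X_b=\cdot\mid X_a=\sigma')/R(\cdot)$, a row-wise comparison. The chain
\[
\frac{\Pr(X_b=\sigma\mid X_a=\sigma')}{R(\sigma)} \;=\; \frac{\Pr(X_a=\sigma',X_b=\sigma)}{R(\sigma')R(\sigma)}\;=\;\frac{\Pr(X_b=\sigma'\mid X_a=\sigma)}{R(\sigma')}
\]
turns the row-wise argmax condition into the paper's column-wise self-prediction only if the middle numerator is symmetric in its two arguments, i.e.~the joint is exchangeable. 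The paper's stated hypothesis (``a symmetric and publicly known common prior $R(\sigma)=\Pr(X_a=\sigma)=\Pr(X_b=\sigma)$'') is symmetry of the \emph{marginals}, which is strictly weaker. One can in fact build a $3\times 3$ joint with uniform marginals that satisfies \cref{def:sp} (diagonal entry largest in each column) yet is not truthful under PTS (some off-diagonal entry largest in a row), so the exchangeability you invoke is genuinely necessary and not implied by the stated assumptions. In the cited PTS literature and in this paper's own experimental setting (Dawid--Skene, with agents' signals i.i.d.\ conditional on a latent ground truth) exchangeability does hold, so the proposition is fine in context; but your proof is more careful than the paper's statement in making the dependence explicit.
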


\textbf{Is PTS SD-truthful?}\;
The short answer is no.
Each individual score $S^{PTS}_i$ takes at most $|\Sigma| + 1$ values, with the maximum being $S_{\text{max}} = 1/R(\sigma_{\min})$, where $\sigma_{\min}$ is the least likely signal under the prior.
A necessary condition for PTS to be SD-truthful is that truth-telling maximizes $\Pr\left(S^{PTS}_i(\tau) = S_{max}\right)$, which is the probability of Alice agreeing with Bob on signal $\sigma_{min}$.
While truth-telling, this probability is the joint distribution $\Pr(X_a = \sigma_{min}, X_b = \sigma_{min})$.
However, if Alice uses an uninformative strategy $\mu$ that always reports $\sigma_{\min}$, this probability increases to $\Pr(X_b = \sigma_{\min})$.
Therefore, although $\mu$ decreases the expected score, its distribution is not stochastically dominated by truth-telling, meaning that there exists an increasing utility function under which Alice strictly prefers $\mu$ to $\tau$.

\subsection{Correlated Agreement (CA)}
\label{subsec:ca}


The correlated agreement (CA) mechanism \citep{shnayder2016informed} takes two questions to compute an individual score which has three possible values: -1, 0, and 1.
Suppose the joint distribution $\Pr(X_a, X_b)$ is known.
The CA mechanism first computes the delta matrix $\Delta_{\sigma,\sigma'} = \Pr(X_a = \sigma, X_b = \sigma') - \Pr(X_a = \sigma)\Pr(X_b = \sigma')$, which is the difference between the joint distribution and the product of marginal distributions.
Let $T_\Delta(\sigma,\sigma') = \text{Sign}(\Delta)_{\sigma,\sigma'}$ be the agreement function where $\text{Sign}(x) = 1$ if $x>0$ and 0 otherwise.
Intuitively, $T_\Delta(\sigma,\sigma')=1$ if the pair of signals $(\sigma,\sigma')$ appears more often on the same question than on distinct questions.

Then, CA repeatedly samples a bonus question $j$ and a penalty question $k$.
For an individual score $S^{CA}_i$, Alice gets 1 if both agents agree on $j$ and gets $-1$ if her response on $j$ agrees with Bob's response on $k$, i.e.~$S^{CA}_i=T_{\Delta}(\hat{X}_{a,j}, \hat{X}_{b,j}) - T_{\Delta}(\hat{X}_{a,j}, \hat{X}_{b,k})$.
Since each bonus question can be paired with $n-1$ penalties, the final score is the average over $n(n-1)$ individual scores.

Intuitively, the CA mechanism works by encouraging correlations on the same question and penalizing correlations on distinct questions. 
Consequently, if Alice always reports the majority signal, the frequency of agreements on the bonus question will be identical to the frequency of agreements on the penalty questions, which results in a score of 0.
Prior work shows that when $\Pr(X_a, X_b)$ is known (or accurately estimated), CA is truthful\textemdash and in fact, \emph{informed truthful} \citep{shnayder2016informed}.

\textbf{Is CA SD-truthful?}\;
The answer depends on the implementation.
The following example illustrates that the original implementation of CA described above is not SD-truthful.

\begin{example}\label{exm:counter_ca}
Suppose $n=2$, so that Alice knows each of the two questions she answers will be used once as the bonus question and once as the penalty question.
Her final score is the average of two individual scores, each with $S^{CA}_i\in \{-1,0,1\}$.
Thus, the final score has five possible values: $S^{CA}\in \{-1,-0.5,0,0.5,1\}$.
A necessary condition for SD-truthfulness is that $\Pr(S^{CA}(\tau)=-1)\le \Pr(S^{CA}(\theta)=-1)$ for any strategy $\theta$.
However, we show that this does not hold in general.

Suppose the signals are positively correlated, i.e.~$T_\Delta$ is a diagonal matrix.
Alice receives a score of $-1$ if and only if she reports different signals on the two questions and Bob disagrees on both.
Under truth-telling, this happens with probability $\Pr(S^{CA}(\tau)=-1)=2\Pr(X_a=0, X_b=1)\Pr(X_a=1, X_b=0)$.
However, if Alice always reports the same signal (e.g., $\mu(\sigma) = 0$ for any $\sigma$), she can avoid this outcome entirely.
Therefore, $\mu$ is not stochastically dominated by $\tau$.
\end{example}

This warns that \textbf{repeatedly using the same question as the bonus or the penalty question will sabotage the SD-truthfulness of CA.}

\section{A Rounding Reduction For Stochastically Dominant-Truthfulness}
\label{sec:rounding}

We introduce a straightforward rounding reduction that maps any truthful mechanism to an SD-truthful one. 
However, this approach often comes at a huge cost of sensitivity. 
To address this, we propose a more refined rounding method that partially restores the original mechanism's sensitivity.

\subsection{Direct Rounding}
\label{subsec:rounding}

Consider a peer prediction mechanism $\mathcal{M}$ that assigns Alice a score $S^{\mathcal{M}}$ with bounded support $S^{\mathcal{M}}_{\sup}<\infty$ and $S^{\mathcal{M}}_{\inf}>-\infty$.
In \emph{direct rounding}, we normalize the score to a probability: $\lambda^{\mathcal{M}} = \frac{S^{\mathcal{M}} - S^{\mathcal{M}}_{\inf}}{S^{\mathcal{M}}_{\sup} - S^{\mathcal{M}}_{\inf}}\in[0,1]$.
Alice’s final score $\tilde{S}^{\mathcal{M}}$ is then determined by a binary lottery: receiving 1 with probability $\lambda^{\mathcal{M}}$ and 0 otherwise.

\begin{prop}\label{prop:direct}
    The direct rounding reduction of a truthful mechanism with bounded scores is SD-truthful.
\end{prop}

Intuitively, because direct rounding always outputs a Bernoulli score, SD-truthfulness simply implies that truth-telling should maximize the expected success probability.  But the expected success probability is just the success probability.
By our design of $\lambda^{\mathcal{M}}$, the success probability is a linear function of the final score $S^{\mathcal{M}}$, which proves equivalence between SD-truthfulness and truthfulness.

However, scoring agents via a single binary score is clearly not practical.
The following proposition suggests that direct rounding usually destroys the sensitivity of the original mechanism.
Let $\delta^{\mathcal{M}}(e)$ and $\tilde{\delta}^{\mathcal{M}}(e)$ be the sensitivity of $\mathcal{M}$ and its direct rounding reduction. 
Additionally, let $m^{\mathcal{M}}(e)$ and $\text{std}^{\mathcal{M}}(e)$ be the expected score and the standard deviation.

\begin{prop}\label{prop:sensitivity_direct}
    The sensitivity ratio between a mechanism and its direct rounding is:
        \[\frac{\delta^{\mathcal{M}}(e)}{\tilde{\delta}^{\mathcal{M}}(e)} = \frac{\sqrt{\left(S^{\mathcal{M}}_{\sup} - m^{\mathcal{M}}(e)\right)(m^{\mathcal{M}}(e)-S^{\mathcal{M}}_{\inf})}}{\text{std}^{\mathcal{M}}(e)}.\]
\end{prop}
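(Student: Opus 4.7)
The plan is to peel back the definition of sensitivity for both mechanisms and observe that the derivative $\nabla \mathbb{E}[S^{\mathcal{M}}(e)]$ appears in both numerators in a way that cancels out, leaving only a ratio of standard deviations. Writing $m = m^{\mathcal{M}}(e)$, $S_{\inf} = S^{\mathcal{M}}_{\inf}$, $S_{\sup} = S^{\mathcal{M}}_{\sup}$ for brevity, one has
\[
\delta^{\mathcal{M}}(e) = \frac{\nabla m}{\text{std}^{\mathcal{M}}(e)}, \qquad \tilde{\delta}^{\mathcal{M}}(e) = \frac{\nabla \mathbb{E}[\tilde{S}^{\mathcal{M}}(e)]}{\text{std}(\tilde{S}^{\mathcal{M}}(e))},
\]
so the whole task reduces to expressing $\mathbb{E}[\tilde{S}^{\mathcal{M}}]$ and $\text{std}(\tilde{S}^{\mathcal{M}})$ in terms of $m$, $S_{\inf}$, and $S_{\sup}$.

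First I would compute the mean. By the tower property (conditioning on $S^{\mathcal{M}}$, which determines the Bernoulli parameter $\lambda^{\mathcal{M}}$), $\mathbb{E}[\tilde{S}^{\mathcal{M}}] = \mathbb{E}[\lambda^{\mathcal{M}}] = (m - S_{\inf})/(S_{\sup} - S_{\inf})$. Differentiating in $e$ yields $\nabla \mathbb{E}[\tilde{S}^{\mathcal{M}}(e)] = \nabla m / (S_{\sup} - S_{\inf})$. Next I would compute the variance. Because $\tilde{S}^{\mathcal{M}} \in \{0,1\}$, we have $(\tilde{S}^{\mathcal{M}})^2 = \tilde{S}^{\mathcal{M}}$, so $\mathrm{Var}(\tilde{S}^{\mathcal{M}}) = p(1-p)$ with $p = \mathbb{E}[\tilde{S}^{\mathcal{M}}]$. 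Using $1-p = (S_{\sup}-m)/(S_{\sup}-S_{\inf})$ gives
\[
\text{std}(\tilde{S}^{\mathcal{M}}(e)) = \frac{\sqrt{(m - S_{\inf})(S_{\sup} - m)}}{S_{\sup} - S_{\inf}}.
\]
Plugging both expressions into the ratio $\delta^{\mathcal{M}}(e)/\tilde{\delta}^{\mathcal{M}}(e)$, the factors $\nabla m$ and $S_{\sup}-S_{\inf}$ cancel, and the claimed identity falls out.

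The only subtlety — and it is minor enough that I would not call it a real obstacle — is that $\lambda^{\mathcal{M}}$ is itself a random variable (a deterministic function of the underlying score $S^{\mathcal{M}}$, which is random through signals, strategies, and mechanism coins), so one must take iterated expectations rather than treat $\lambda^{\mathcal{M}}$ as a fixed parameter. The identity $(\tilde{S}^{\mathcal{M}})^2 = \tilde{S}^{\mathcal{M}}$ is what keeps the variance computation clean: it absorbs the randomness of $\lambda^{\mathcal{M}}$ entirely into $p$, so there is no extra $\mathrm{Var}(\lambda^{\mathcal{M}})$ term to track. As a sanity check, the law of total variance gives $\mathrm{Var}(\tilde{S}^{\mathcal{M}}) = \mathbb{E}[\lambda^{\mathcal{M}}(1-\lambda^{\mathcal{M}})] + \mathrm{Var}(\lambda^{\mathcal{M}})$, which must also equal $p(1-p)$ — a cute consistency with $\mathbb{E}[\lambda^{\mathcal{M}}(1-\lambda^{\mathcal{M}})] = p(1-p) - \mathrm{Var}(\lambda^{\mathcal{M}})$ — but this decomposition is not needed for the proof itself.
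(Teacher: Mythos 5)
Your proof is correct and follows essentially the same route as the paper: treat $\tilde{S}^{\mathcal{M}}$ as a Bernoulli variable with success probability $\mathbb{E}[\lambda^{\mathcal{M}}] = (m^{\mathcal{M}}(e) - S^{\mathcal{M}}_{\inf})/(S^{\mathcal{M}}_{\sup} - S^{\mathcal{M}}_{\inf})$, read off its mean derivative and standard deviation, and take the ratio. The paper asserts the Bernoulli claim outright while you pause to justify it via the tower property and $(\tilde{S}^{\mathcal{M}})^2 = \tilde{S}^{\mathcal{M}}$, but the computation is identical once that point is established.
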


This ratio is typically large.
For example, the final score of OA is an averaged Binomial $\mathrm{Bin}(n, m^{OA}(e))$, whose standard deviation is $\sqrt{m^{OA}(e)(1 - m^{OA}(e)) / n}$. Thus, the sensitivity ratio between OA and its direct rounding is $\sqrt{n}$.
This illustrates a key drawback of direct rounding: it greatly reduces the sensitivity by eliminating the benefit of averaging over multiple questions.

\subsection{Partition Rounding}
\label{subsec:partition}

To reduce variance of and improve sensitivity, we aim to better leverage the information from all $n$ questions.
In particular, if the final score of $S^{\mathcal{M}}$ is the average of $K$ independent individual scores $S^{\mathcal{M}}_i$, we can apply direct rounding to each $S^{\mathcal{M}}_i$ and take the average.
This motivates the idea of \emph{partition rounding}.
We first partition the $n$ questions into $K$ disjoint subsets, using each to compute an individual score. 
Each score is then directly rounded as in \cref{subsec:rounding}, and the final score $\hat{S}^{\mathcal{M}}$ is the average of the $K$ rounded scores.
The following lemma proves the feasibility of this idea.

\begin{lemma}\label{lem:Kto1}
Let $S=\frac{1}{K}\sum_{i\in [K]}S_i$, where the individual scores $S_i$ are i.i.d..
Then, $S(\tau)$ first-order stochastically dominates $S(\theta)$ if and only if each individual score $S_i(\tau)$ first-order stochastically dominates $S_i(\theta)$.
\end{lemma}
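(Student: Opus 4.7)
The plan is to prove both directions of the equivalence. The ``if'' direction follows from a standard coupling argument, while the ``only if'' direction requires a more delicate extremal analysis since averaging $K$ i.i.d.\ variables mixes probability mass across the support nonlinearly.

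For the ``if'' direction, suppose each $S_i(\tau)$ first-order stochastically dominates $S_i(\theta)$. By Strassen's theorem, I would couple $S_i(\tau)$ and $S_i(\theta)$ on a common probability space with $S_i(\tau) \geq S_i(\theta)$ almost surely while preserving marginals. Taking $K$ independent copies of this coupling preserves the i.i.d.\ structure on both sides. Averaging then lifts the pointwise inequality to $S(\tau) \geq S(\theta)$ almost surely, which directly implies $S(\tau)$ FOSD $S(\theta)$.

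For the ``only if'' direction, I would proceed by contrapositive: assuming individual FOSD fails at some threshold, exhibit a threshold at which average FOSD also fails. Enumerate the combined support as $v_0 < v_1 < \cdots < v_L$ with masses $p_\ell^\tau, p_\ell^\theta$. The clean extremal cases are immediate: at $t$ just above $v_0$ (inside the gap $[v_0, v_0 + (v_1 - v_0)/K)$), the event $\{S \leq t\}$ collapses to ``all $S_i = v_0$'', giving $\Pr(S(\tau) \leq t) = (p_0^\tau)^K$, so a violation $p_0^\tau > p_0^\theta$ directly contradicts FOSD of the averages; symmetrically, $\{S \geq v_L\}$ forces all $S_i = v_L$ and yields $p_L^\tau \geq p_L^\theta$. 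For violations at interior values $v_\ell$, my plan is to perform downward induction on $\ell$: at each step choose a suitable threshold on the average scale and decompose $\{S \geq t\}$ into configurations whose contributions are either already pinned down by previous inductive steps or directly reveal a CDF violation at level $\ell$.

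The main obstacle is precisely this inductive step: for general support gaps, the event $\{S \geq t\}$ does not cleanly isolate a single event $\{S_i \geq v_\ell\}$, since tuples mixing values well above and below $v_\ell$ can compensate to stay on the right side of the threshold. If the combinatorics becomes intractable, I would fall back on an analytic route using the characteristic-function relation $\phi_S(u) = \phi_{S_i}(u/K)^K$: since $\phi_{S_i}$ is uniquely determined by $\phi_S$ (root-of-unity ambiguity resolved by continuity at the origin), the distribution of $S_i$ is a continuous functional of the distribution of $S$, and one should transfer FOSD of the averages to FOSD of the individuals using the structural constraints, though pure MGF ordering alone is not sufficient and additional shape arguments would be needed.
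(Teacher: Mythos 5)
Your forward direction (Strassen coupling, average the pointwise inequality) is exactly the paper's argument, phrased slightly more formally.

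For the reverse direction, you and the paper use the same overall strategy — contrapositive plus a threshold analysis on the sum — and, importantly, both attempts have essentially the same unresolved gap. You correctly observe that the extremal thresholds near $v_0$ and $v_L$ cleanly isolate the single-configuration events $\{\text{all } S_i = v_0\}$ and $\{\text{all } S_i = v_L\}$, which handles any FOSD violation at $F(v_0)$ or $F(v_{L-1})$; and you then honestly flag that for a violation at a strictly interior threshold you do not have a working inductive step, because averages let mass below $v_\ell$ compensate with mass above $v_\ell$ on the same side of the cutoff. The paper's proof attempts to hide this same difficulty inside an induction: it picks an arbitrary $t_1$ with $F_\tau(t_1) > F_\theta(t_1)$, sets $t_K = t_{K-1} + t_1$, and claims $F^K_\tau(t_K) > F^K_\theta(t_K)$ follows from the convolution identity $F^{K}_\tau(t_K) = \sum_{s} \Pr\bigl(\sum_{i < K}S_i(\tau)=s\bigr)\,F_\tau(t_K-s)$. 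But the analogous expansion of $F^K_\theta(t_K)$ uses the \emph{different} weights $\Pr\bigl(\sum_{i<K} S_i(\theta)=s\bigr)$, so the base case and inductive hypothesis do not combine term-by-term to give the claimed inequality; the step as written is not justified. Your fallback via characteristic functions runs into the problem you already anticipate — invertibility of the $K$-fold convolution gives back the distribution of $S_i$, but it does not by itself turn FOSD of the averages into FOSD of the parts, since FOSD is not a property one can read off an MGF ordering.

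Two remarks worth adding. First, this reverse direction is never actually invoked anywhere in the paper — only the forward implication is used to justify partition rounding — so the gap does not affect any downstream result, but the lemma as an "if and only if" is not fully proved by either argument. Second, a route that might be repairable is to take $t_1$ to be the \emph{largest} grid point where $F_\tau > F_\theta$ and combine the top-isolation event with the fact that on $(t_1, v_L]$ the tails satisfy $\bar F_\tau \ge \bar F_\theta$; your "downward induction on $\ell$" instinct is the right shape for this, but as you note, the combinatorics of which mixed configurations land on which side of the threshold is exactly where the work remains.
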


\begin{prop}\label{prop:partition}
    The partition rounding reduction of a truthful mechanism with bounded scores is SD-truthful.
\end{prop}

We defer the proof of \cref{lem:Kto1}, which follows from a straightforward coupling argument, to \cref{app:Kto1}.
\Cref{prop:partition} then follows directly from this lemma and the SD-truthfulness of direct rounding.


We now show that the partition rounding reduction can partially recover the sensitivity of the original mechanism.
The key observation is that the final score under partition rounding is the average of a binomial variable, i.e.~$\hat{S}^{\mathcal{M}}(e)=\frac{1}{K}\mathrm{Bin}(K, \lambda^{\mathcal{M}}(e))$, where $\lambda^{\mathcal{M}}(e)=\frac{m^{\mathcal{M}} - S^{\mathcal{M}}_{\inf}}{S^{\mathcal{M}}_{\sup} - S^{\mathcal{M}}_{\inf}}$ with $m^{\mathcal{M}}$ being the expectation of the individual score.
Putting this into \cref{def:sensitivity} gives us the sensitivity of the partition rounding reduction
\begin{equation}\label{eq:delta_partition}
    \hat{\delta}^{\mathcal{M}}(e) = \frac{\nabla m^{\mathcal{M}}(e)\cdot \sqrt{K}}{\sqrt{(m^{\mathcal{M}}(e)-S^{\mathcal{M}}_{\inf})\left(S^{\mathcal{M}}_{\sup} - m^{\mathcal{M}}(e)\right)}}.
\end{equation}
The sensitivity of the partition rounding reduction is exactly $\sqrt{K}$ times that of the direct rounding reduction.
This means that by using partition rounding, we can recover a substantial portion of the original mechanism’s sensitivity.

In \cref{app:sensitivity_partition}, we present the sensitivities of the partition-rounded versions of the three mechanisms discussed.
We summarize the limitations of these mechanisms below which motivates the design of a new SD-truthful mechanism introduced in the next section.
\begin{itemize}
    \item OA is already SD-truthful without rounding and thus has a high sensitivity. However, OA is SD-truthful only for self-dominating signals.
    \item PTS requires rounding to ensure SD-truthfulness. Therefore, when the prior of signals is biased, $S^{PTS}_{\sup}$ becomes large, which in turn leads to a low sensitivity.
    \item CA uses two questions to compute an individual score, so $K=n/2$. Furthermore, only bonus questions contribute to sensitivity, while penalty questions are used to enforce truthfulness.  This dilutes the sensitivity by half again. Therefore, only $1/4$ of the questions count for the sensitivity of the partition-rounded CA. Hence, even without rounding, the sensitivity of the partition-rounded CA is only half of that of the original implementation.
\end{itemize}

\section{The Enforced Agreement Mechanism}
\label{sec:ea}



This section aims to present a novel mechanism with high sensitivity, which does not require the above rounding reduction.
To motivate our idea, we begin by summarizing the key lessons learned from the previous discussions.

First, \cref{exm:counter_ca} shows a particular type of strategy that prevents previous mechanisms from being SD-truthful: reducing/changing the variance of scores, e.g.~by always reporting the same signal.
This inspires our \emph{enforced agreement (EA)} mechanism, where the idea is to control a randomness level within agents' reports so that strategic manipulations cannot arbitrarily alter the variance of the scores. 

Second, our analysis of sensitivity in \cref{subsec:partition} suggests that a high-sensitivity mechanism requires a larger number of partitions $K$ and a smaller score range $S_{\sup} - S_{\inf}$. 
This insight guides us to use just one question to compute an individual score (so $K = n$) and avoid the need for rounding.
We show that EA meets these criteria.

We first prove the SD-truthfulness of EA in the binary-signal setting, and then establish a negative result explaining why it fails to ensure SD-truthfulness when signals are non-binary.
Nonetheless, we show that EA remains truthful in non-binary settings when the signal structure is known or can be accurately learned.

\subsection{EA in the Binary-signal Setting}
We introduce EA in the binary-signal setting and defer the discussion for general settings to the appendix.
The idea of EA is to control the marginal distribution of Alice's reports.
In the binary setting, the mechanism commits to a target empirical distribution $\Phi=(n_0, n_1)$ where $n_0+n_1=n$.
The mechanism expects Alice to report signal $i$ on exactly $n_i$ out of $n$ questions.
If Alice's actual distribution $\Phi_{X_a}=(m_0, m_1)$ and suppose W.L.O.G.~that $m_0 > n_0$, the mechanism randomly selects $m_0-n_0$ of her 0-reports and flips them to 1.
After enforcing $\Phi$, Alice is scored using the output agreement mechanism (OA).

Compared with OA, the enforcement process in EA prevents Alice from benefiting by over-reporting the majority signal.
However, this enforcement introduces dependencies across questions, so the binary agreement scores are no longer i.i.d., and \cref{lem:Kto1} does not apply.
Our main result shows that EA is SD-truthful when signals are self-prediction (\cref{def:sp}).

\begin{theorem}\label{thm:ea_binary}
    If $|\Sigma| = 2$ and signals are self-predicting, the enforced agreement mechanism is SD-truthful for any $\Phi$.
\end{theorem}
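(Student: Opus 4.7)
My plan is to condition on Bob's signal vector $X_b$ and reduce the claim to a one-dimensional, conditional FOSD statement. Since Bob is truthful, his report equals $X_b$, which is independent of Alice's strategy, and the EA score can be rewritten in $2\times 2$ contingency-table form as
\[
n\cdot S^{EA} \;=\; 2c_{00} + n - n_0 - n_0^B,
\]
where $c_{00}$ is the number of questions on which both Alice's post-enforcement report and Bob's report are $0$, $n_0$ is the fixed enforcement target, and $n_0^B = |J_0|$ with $J_0 := \{i : X_{b,i} = 0\}$. Because $n_0$ is fixed and $n_0^B$ is a function of $X_b$ alone, it suffices to show that for every realization of $X_b$, the conditional law of $c_{00}^{\tau}$ first-order stochastically dominates that of $c_{00}^{\theta}$ for any alternative strategy $\theta$.

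Fixing such an $X_b$ and letting $k := |J_0|$, I would use the self-predicting assumption to show that Alice's signals are conditionally independent Bernoullis with parameter $q_0 := \Pr(X_a=0\mid X_b=0) > q_1 := \Pr(X_a=0\mid X_b=1)$ on $J_0$ and $J_0^c$ respectively. For a strategy $\theta=(p_0,p_1)$ with $p_\sigma := \Pr(\hat{X}_a = 0\mid X_a = \sigma)$, the pre-enforcement $0$-indicators are then independent Bernoullis with parameters $r_\sigma(\theta) := q_\sigma p_0 + (1-q_\sigma)p_1$, and the bias $r_0(\theta) - r_1(\theta) = (q_0-q_1)(p_0-p_1)$ is uniquely maximized over $(p_0,p_1)\in[0,1]^2$ at truth-telling $(p_0,p_1) = (1,0)$. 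Because enforcement acts symmetrically across all $n$ tasks, the post-enforcement $0$-position set $Y^\theta$ is, conditional on $X_b$, invariant under permutations within $J_0$ and within $J_0^c$, so its law is fully determined by the scalar $c_{00} = |Y^\theta \cap J_0|$. The problem therefore collapses to a one-dimensional FOSD comparison parameterized by $(r_0, r_1)$.

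For the core step, I would compute the conditional law of $c_{00}$ given the pre-enforcement split $(M_{00},M_{01})$ of $0$-reports across $J_0$ and $J_0^c$: in each of the three enforcement regimes determined by $M_0 \gtreqless n_0$ (with $M_0 := M_{00} + M_{01}$), this law is a hypergeometric distribution (with a deterministic summand when $M_0 < n_0$). Marginalizing against the independent binomials $M_{00} \sim \mathrm{Bin}(k, r_0)$ and $M_{01} \sim \mathrm{Bin}(n-k, r_1)$ yields an explicit form, on which I would attempt a coupling of $(Y^\tau, Y^\theta)$ that shares Alice's signal draws, her strategy randomization, and the enforcement flips across the two strategies, to exhibit stochastic dominance on the coupled probability space.

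The main obstacle I anticipate is that the feasible set of $(r_0,r_1)$ is a parallelogram with vertices $(0,0),(1,1),(q_0,q_1),(1-q_0,1-q_1)$, and truth-telling sits at only one vertex: typical deviations from truth-telling move $r_0$ and $r_1$ in the \emph{same} direction rather than in opposite directions, so a naive coordinate-wise FOSD argument in $(r_0, r_1)$ will not work. The proof must instead exploit the enforcement step in an essential way --- enforcement normalizes the total $0$-count to exactly $n_0$, intuitively cancelling any common shift in $(r_0,r_1)$ and leaving the bias $r_0 - r_1$, uniquely maximized by truth-telling, as the quantity that governs how much of $Y^\theta$'s mass concentrates on $J_0$. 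Turning this intuition into a rigorous FOSD argument spanning all three enforcement regimes simultaneously under a single coupling is where I expect the real technical difficulty to lie.
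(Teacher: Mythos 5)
Your decomposition is genuinely different from the paper's: you condition on Bob's signal vector $X_b$, whereas the paper conditions on Alice's (specifically on $m_0$, the number of zeros Alice observes). The algebraic reduction $n\cdot S^{EA} = 2c_{00} + n - n_0 - n_0^B$ is correct, the exchangeability-within-$J_0$-and-$J_0^c$ observation is a nice structural point, and the self-predicting assumption does give $q_0 > q_1$. So the claim has been correctly reduced to conditional FOSD of $c_{00}$ given $X_b$. But you then stop at exactly the hard step, and say so yourself: you need FOSD of a hypergeometric-mixed-over-binomial family as $(r_0, r_1)$ ranges over a parallelogram in which deviations can shift both coordinates in the same direction, with only the heuristic that enforcement ``cancels the common shift.'' No coupling across the three enforcement regimes is actually exhibited. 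As a proof, this is incomplete; as a plan, it proposes to do the analysis along the axis where the combinatorics is hardest.

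The paper's proof flips the conditioning and thereby avoids the mixture entirely. Condition instead on Alice's signal vector, i.e., on $m_0$. The key point is that Alice's strategy acts on $X_a$, so once $X_a$ is fixed, the entire report-plus-enforcement pipeline can WLOG be absorbed into a single ``manipulation'' count $x_{00} \in \{0,\dots,n_0\}$: the number of questions on which Alice observes $0$ and is ultimately (after her own randomization and the mechanism's enforcement) scored as reporting $0$. The other three counts $x_{01}, x_{10}, x_{11}$ are determined by $m_0, n_0, x_{00}$. Conditional on $(m_0, x)$, the only randomness left is $X_b \mid X_a$, which is a product of independent Bernoullis, so the score is a sum of four independent binomials with success probabilities $p_{00}, p_{01}, p_{10}, p_{11}$. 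Truth-telling hits the corner $x_{00} = n_0$ (when $m_0 > n_0$); lowering $x_{00}$ by $k$ reallocates exactly $k$ draws from $\mathrm{Bernoulli}(p_{00})$ to $\mathrm{Bernoulli}(p_{10})$ and $k$ draws from $\mathrm{Bernoulli}(p_{11})$ to $\mathrm{Bernoulli}(p_{01})$. Self-prediction gives $p_{00} > p_{10}$ and $p_{11} > p_{01}$, so a trivial term-by-term coupling finishes it, with no hypergeometrics and no case analysis on enforcement regimes. In short: conditioning on $X_b$ leaves Alice's signals random and forces you to control a compound binomial-hypergeometric law over a two-dimensional parameter set, whereas conditioning on $X_a$ makes the problem one-dimensional and the coupling immediate. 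I'd flip your conditioning.
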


We defer the detailed proof to the appendix and present the main ideas below.
Note that the final score of the enforced agreement mechanism is the average of $n$ Bernoulli variables, each with a potentially different success probability.
In the binary-signal setting, there are four possible success probabilities in total\textemdash $p_{ij}=\Pr(X_b=j\mid X_a=i)$ for $i,j\in \{0,1\}$\textemdash representing the probability of Bob observing (and reporting) $j$ on a question conditioned on Alice observing $i$.
A Bernoulli variable with success probability $p_{ij}$ corresponds to a question on which Alice observes $i$ but is flipped to $j$ either by the mechanism or by Alice herself.
Suppose W.L.O.G.~that $m_0>n_0$, meaning that the mechanism will randomly flip $m_0-n_0$ reports of Alice from $0$ to $1$ under truth-telling.
In this case, the final score of Alice is the average of three binomials:
\begin{align}\label{eq:_dist_tau}
    n\cdot S^{EA}(\tau) \sim {\mathrm{Bin}(n_0, p_{00})} + {\mathrm{Bin}(m_1, p_{11})} + {\mathrm{Bin}(m_0-n_0, p_{01})}.
\end{align}
We show that for any untruthful strategy, the resulting final score can be written as: 
\begin{align*}
    n\cdot S^{EA}(\theta) {\sim \mathrm{Bin}(n_0-k, p_{00})} + {\mathrm{Bin}(m_1 - k, p_{11})} + {\mathrm{Bin}(m_0-n_0 + k, p_{01})} + {\mathrm{Bin}(k, p_{10})},
\end{align*}
where $0\le k\le\min(n_0, m_1)$.
Under the self-prediction condition, $p_{00}>p_{10}$ and $p_{11}>p_{01}$.
Therefore, any untruthful strategy can only reallocate $k$ samples from two binomial distributions with higher success probabilities to those with smaller success probabilities.
Via a simple coupling argument, we can see that $S(\theta)$ is first-ordered stochastically dominated by $S(\tau)$.  

Additional results are provided in \cref{app:ea}. 
First, we show that EA is not generally SD-truthful in the non-binary setting. 
Strategic signal permutations can produce score distributions that are not stochastically dominated by truth-telling.
However, EA remains truthful when the information structure is known or well estimated.
Moreover, we derive a closed-form expression for the sensitivity-maximizing $\Phi$, enabling computation of the optimal enforcement from the information structure.

\section{Empirical Evaluations of SD-Truthful Mechanisms}
\label{sec:experiment}

In this section, we empirically compare the sensitivity of the proposed mechanisms across various settings. 
The results support our theoretical findings: EA consistently exhibits the highest sensitivity among SD-truthful mechanisms in the binary-signal setting (except when the signal prior is nearly uniform).
To further demonstrate the practical advantages of EA in effort elicitation, we evaluate the budget efficiency of each mechanism --- defined as the minimum payment required to induce a target effort level.
Again, EA proves its superiority, which has the best performance against any other SD-truthful and non-SD-truthful mechanisms.
This supports our sensitivity analysis, demonstrating that a mechanism with high sensitivity can indeed lead to improved practical performance.



\subsection{Datasets and Experiment Setup}\label{subsec:experiment_setup}

We use two real-world datasets to estimate the information structure between two agents.
The first dataset contains binary labels classifying whether a compound is appropriate or inappropriate to be synthesized \citep{baba2018wisdom}.
The second dataset collects the annotations of the sentiment of $300$ tweets, where the size of the signal space is 4 \citep{soton376543}.
We denote these two information structures as $J_1$ and $J_2$ respectively, and defer the details to \cref{app:structure}.

In the main body of the paper, we focus on SD-truthful mechanisms, which include:

\begin{itemize}[topsep=0pt, leftmargin=0.5cm]
    \setlength{\itemsep}{1pt}
    \setlength{\parskip}{0pt}
    \item \textbf{OA} (\cref{subsec:oa}), which is SD-truthful only when signals are self-dominating.
    \item \textbf{PTS-partition-round}\textemdash the partition rounding reduction of PTS (\cref{subsec:pts}), where we re-weight every individual score using signal prior and take the average.
    \item \textbf{CA-partition-round}\textemdash the partition rounding reduction of CA (\cref{subsec:ca}), where we create $K=n/2$ partitions of questions and score agents the average of the individual CA score for each partition after rounding.
    \item \textbf{MA-partition-round}\textemdash the partition rounding reduction of the matching agreement mechanism (\cref{app:matching}).
    \item \textbf{EA-prior}\textemdash EA with enforcement $\Phi = n\cdot\Pr(X_a)$.
    \item \textbf{EA-uniform}\textemdash EA with $\Phi_i = n \cdot 1/|\Sigma|,~\forall i\in \Sigma$.
    \item \textbf{EA-optimal}\textemdash EA with the sensitivity-maximizing $\Phi$ computed according to \cref{subsec:opt_enforce}.
\end{itemize}


For information structures with binary-signal settings, sensitivities can be computed analytically as shown in \cref{app:sensitivity_partition}. 
For general cases, we estimate sensitivity using a Monte Carlo approach.
In particular, we simulate reports to each of the $n$ questions with both agents exerting full effort $e=1$ and run $\mathcal{M}$ to compute the scores.
Repeating this process for $T=20,\!000$ times yields $T$ i.i.d.~samples of $S^{\mathcal{M}} (e)$.
We then compute the score of Alice when she deviates to a lower effort level at $e-\Delta e = 0.8$, and obtain $T$ samples of $S^{\mathcal{M}}(e-\Delta e)$.
The sensitivity of at effort $e$ is then estimated as $\frac{\mathbb{E}[S^{\mathcal{M}}(e)] - \mathbb{E}[S^{\mathcal{M}}(e-\Delta e)]}{\Delta e}\cdot \frac{1}{\operatorname{std}(S^{\mathcal{M}}(e))}$.

\subsection{The Sensitivity of SD-Truthful Mechanisms}\label{subsec:empirical_comparison}
We now compare the sensitivity of the discussed SD-truthful mechanisms.
In \Cref{fig:optimal_ea}, we show the sensitivity when the prior of the binary signal is varied, i.e.~$\Pr[X_a=0] \in [0.1, 0.9]$, and the number of questions $n = 100$. 

\Cref{fig:comparison_numerical_results_binary} and \ref{fig:comparison_numerical_results_non_binary} present sensitivity as the number of questions $n\in \{10,20,\ldots, 100\}$ increases, under information structures $J_1$ and $J_2$, respectively. 
In $J_2$, where $|\Sigma| > 2$, EA is no longer SD-truthful; however, we still include its sensitivity for comparison.

\begin{figure}[htbp]
    \centering
    \begin{subfigure}{0.32\textwidth}
        \centering
        \includegraphics[width=\textwidth]{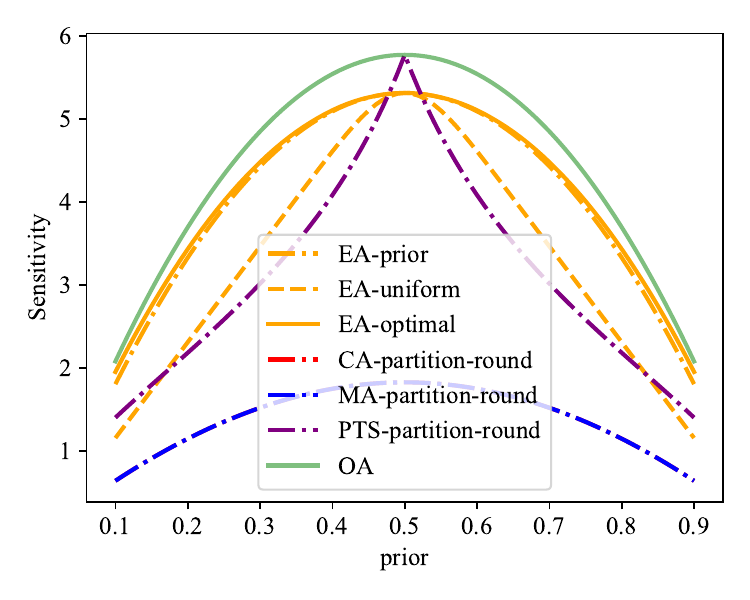} 
        \caption{Varying prior $\Pr[X_a=0]$}
        \label{fig:optimal_ea}
    \end{subfigure}
    \begin{subfigure}{0.32\textwidth}
        \centering
        \includegraphics[width=\linewidth]{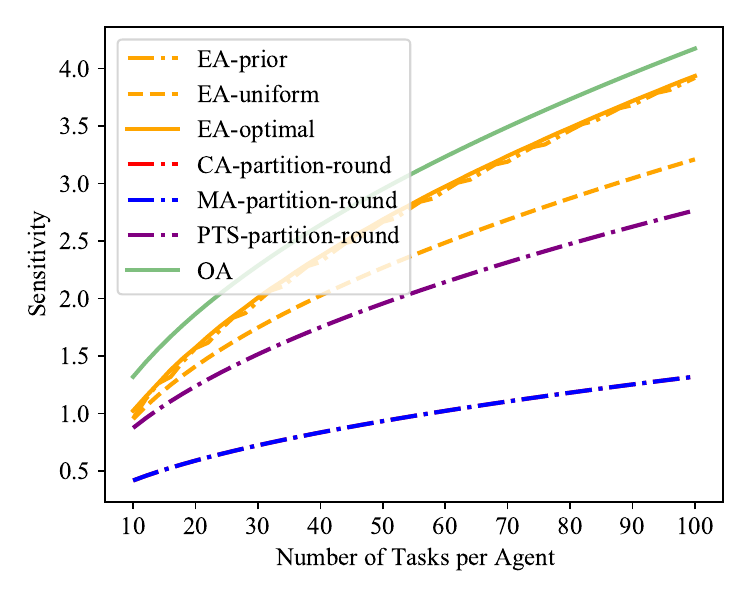}
        \caption{Varying \#tasks under $J_1$}
    \label{fig:comparison_numerical_results_binary}
    \end{subfigure}
        \begin{subfigure}{0.32\textwidth}
        \centering
        \includegraphics[width=\linewidth]{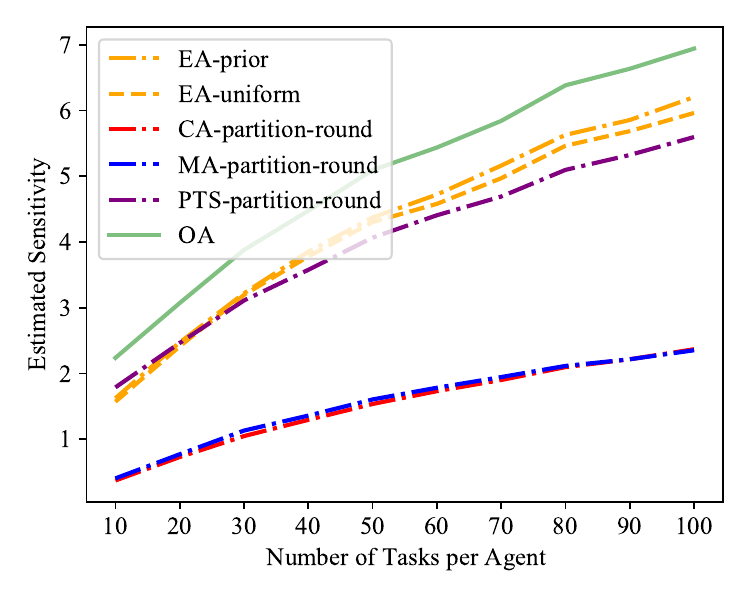}
        \caption{Varying \#tasks under $J_2$}        \label{fig:comparison_numerical_results_non_binary}
    \end{subfigure}
    \caption{
    A Comparison of the Sensitivity of Different Mechanisms.
    }
\end{figure}

We make the following observations. First, although the partition reductions of CA and MA are SD-truthful for general settings, their sensitivities are dominated by other SD-truthful mechanisms.
This is due to their requirements of the penalty questions, which do not contribute to the sensitivity, and the fact that they require two questions to compute an individual score.

Second, PTS performs well when the signal prior is nearly uniform, i.e.~$\Pr(X_a=0)\approx 0.5$.
Intuitively, this is because the rounding factor that enlarges the gap $S^{PTS}_{\sup}-S^{PTS}_{\inf}$ shrinks when the prior is close to uniform. 
However, even a slight bias, i.e.~$|\Pr(X_a=0)-0.5|>0.03$ causes PTS-partition-round to be outperformed by EA.

Third, we observe that enforcing the signal prior in EA closely approximates the optimal enforcement.
Remarkably, even EA-uniform, a detail-free version that doesn’t require knowledge of the prior, consistently outperforms PTS-partition-round, which does rely on prior knowledge for SD-truthfulness.

Additionally, we note that the partition-rounding reduction of MA coincides with that of CA when signals are binary (see \cref{app:matching}).
Consequently, the blue and red overlap in \cref{fig:optimal_ea} and \ref{fig:comparison_numerical_results_binary}.

\subsection{Budgetary Efficiency}\label{subsec:empirical_budgetary_efficiency}

We further evaluate the performance of the proposed mechanisms in an effort-elicitation setting, aiming to demonstrate that SD-truthful mechanisms with higher sensitivity lead to more efficient effort elicitation.
This setting differs from the sensitivity estimation setup in several key ways. 
First, it involves more than two agents, with each agent potentially answering different subsets of questions. 
Second, we consider a symmetric effort equilibrium in which all agents exert the same effort and no agent has an incentive to deviate. In contrast, in the previous setting, Bob was assumed to always exert full effort.
Finally, we incorporate effort costs and explicit payment schemes to better reflect realistic crowdsourcing conditions. 
These enhancements are intended to strengthen the robustness of our findings.

\subsubsection{Model}
Following a prior work \citep{zhang2023higheffort}, we consider an effort-elicitation model where each agent has an effort level $e_i$ with a convex cost function $c(e_i) = e_i^2$ reflecting the increasing marginal cost of effort.
The principal pays each agent a function of her score computed by a peer prediction mechanism, and agents best respond by choosing an effort that maximizes their expected utility --- the difference between the expected payment and the cost of effort.
Note that because the peer prediction score of an agent depends on all agents' effort levels, we consider a symmetric equilibrium as our solution concept, where every agent exerts the same effort level, and no one wants to unilaterally deviate.
This motivates the use of \emph{budgetary efficiency} as a performance metric for peer prediction mechanisms, defined as the minimum budget required to incentivize a target effort level at the symmetric equilibrium. Intuitively, a mechanism that is more sensitive to changes in information can induce higher effort under the same budget and therefore achieve greater budgetary efficiency.

We consider the following payment schemes to map peer prediction scores to final payments:

\begin{itemize}
    \item The \emph{linear payment scheme} compensates agent $i$ according to
    \[
    \pi_i = a \cdot S^{\mathcal{M}}_i + b.
    \]
    
    \item The \emph{nonlinear threshold payment scheme} awards agent $i$ a fixed bonus if her score exceeds a threshold $t$:
    \[
    \pi_i = C_{\operatorname{thres}} \cdot \mathbbm{1}[S^{\mathcal{M}}_i > t].
    \]
    
    In our experiments, we first estimate the score distribution by simulating agents' reports and applying the peer prediction mechanism. We then set the threshold $t$ as a chosen percentile of the resulting empirical score distribution, and keep it fixed when computing the final payments.
\vspace{-2mm}
\end{itemize}
In the above payment schemes, $a,b,C_{\operatorname{thres}},t$ are constant parameters that can be optimized for budgetary efficiency in practice, given the score distributions.
Additionally, we require payments to satisfy limited liability (i.e.~the ex-post payment must be non-negative) and individual rationality (IR, i.e.~the expected utility must be non-negative).

Note that both the peer prediction mechanism and the associated payment scheme influence budgetary efficiency.
To preserve truthfulness in payments, a truthful mechanism must be paired with a linear payment scheme.
In contrast, SD-truthful mechanisms can be combined with non-linear payment schemes, such as threshold payments.
Our goal is to investigate whether the combination of SD-truthful mechanisms and non-linear payments can achieve greater efficiency than truthful mechanisms with linear payments.

\subsubsection{Experiment setup}
We consider the information structure $J_1$ with 50 agents and 500 tasks, where each agent works on 50 tasks and each task is randomly assigned to 5 agents. 
To incentivize a desired effort level $e$ as a local symmetric equilibrium, we learn the parameters in the payment schemes by solving $\nabla \pi_i(e_i) = \nabla c(e_i)$ for any $i$.
Note that $\nabla c(e_i)$ can be directly computed given a desired effort $e_i = e$.
To compute $\nabla \pi_i$, we repeat the same procedure as discussed in \cref{subsec:experiment_setup}.
In particular, we simulate samples of signals and apply various mechanisms to compute the peer prediction score $S_i^{\mathcal{M}}$ for each agent.
Then, varying the effort from $e$ to $e-\Delta e$, we can estimate $\nabla S^M_i$ using a sample size $T=10,\!000$.
Putting this estimation into the formula of various payment schemes thus gives us an estimation of $\nabla \pi_i$.
After learning the parameters of the payment schemes, we compute the budgetary efficiency as the total payment to incentivize such an effort level.
For the threshold payment, we split the $T=10,\!000$ samples into two halves: the first half is used to compute the threshold $t$ as the 75th percentile of the score distribution across all agents, and the second half is used to estimate the resulting payment.

\begin{figure}[H]
    \centering
    \begin{subfigure}{0.38\textwidth}
        \centering
        \includegraphics[width=\linewidth]{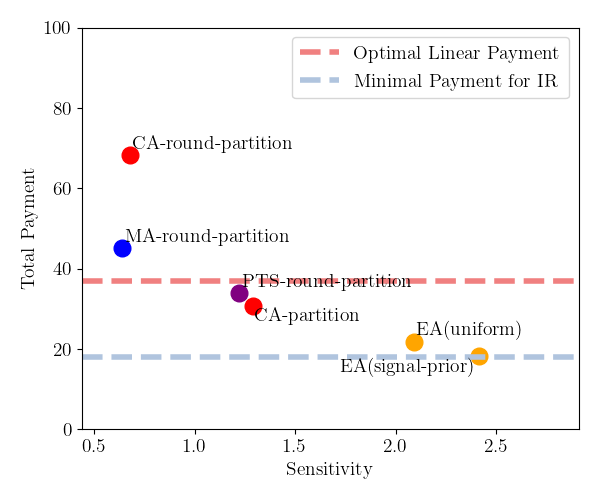}
        \caption{$e=0.6$}
    \end{subfigure}
    \begin{subfigure}{0.38\textwidth}
        \centering
        \includegraphics[width=\linewidth]{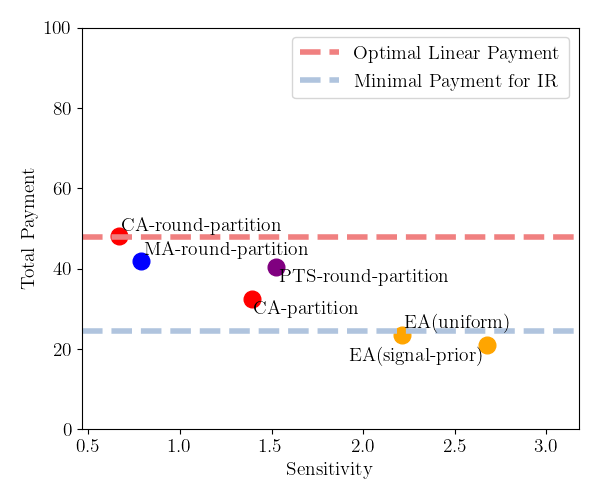}
        \caption{$e=0.7$}
    \end{subfigure}
    \begin{subfigure}{0.38\textwidth}
        \centering
        \includegraphics[width=\linewidth]{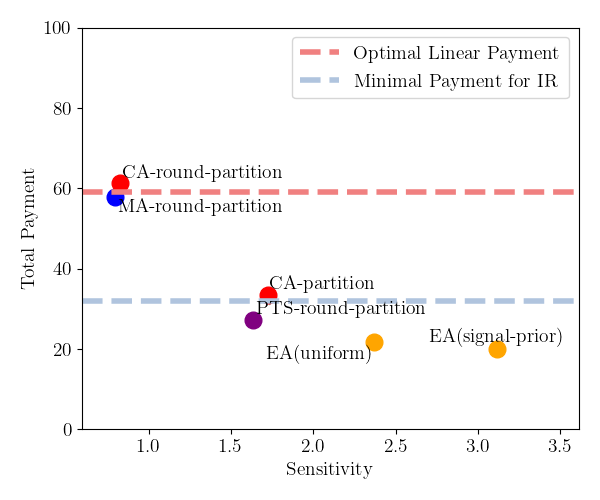}
        \caption{$e=0.8$}
    \end{subfigure}
    \begin{subfigure}{0.38\textwidth}
        \centering
        \includegraphics[width=\linewidth]{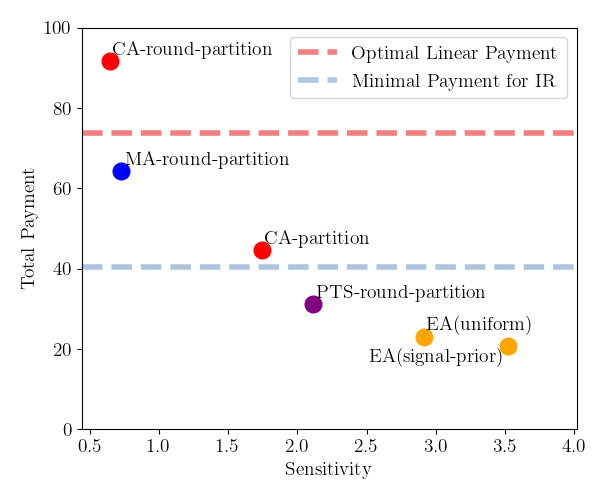}
        \caption{$e=0.9$}
    \end{subfigure}
    \caption{Estimated Sensitivity v.s.~Total Payment under Threshold Payment.}
    \label{fig:payment-threshold}
\end{figure}

\subsubsection{Results}
We compute the optimal linear payments --- subject to limited liability and individual rationality --- for all truthful mechanisms introduced in Section~\ref{subsec:experiment_setup}, and compare them against the threshold payments used with SD-truthful mechanisms.
A lower total payment indicates greater budgetary efficiency.
\Cref{fig:payment-threshold} presents the results for SD-truthful mechanisms combined with threshold payments across four target effort levels.
The red dashed line marks the lowest total payment required by any truthful mechanism under a linear payment scheme, while the blue dashed line indicates the minimum payment required to satisfy individual rationality. Each dot represents the total payment to elicit the desired effort for a given mechanism, ignoring the IR constraint.

We make several key observations. 
First, sensitivity correlates strongly with budgetary efficiency, consistent with the theoretical findings of prior work \citep{xu2024spot,zhang2023higheffort}. 
Second, as predicted by our theoretical insights, the EA mechanism --- particularly when enforcing the signal prior --- consistently achieves the best performance.
Third, the advantage of pairing an SD-truthful mechanism with a threshold payment scheme over a truthful mechanism with a linear payment scheme becomes more pronounced at higher target effort levels.
This is because greater effort leads to more accurate peer prediction scores, thereby increasing sensitivity, which non-linear payment can leverage more effectively.
These results highlight the practical importance of designing highly sensitive SD-truthful mechanisms, as they offer substantial improvements in the performance of incentive systems across real-world settings.

\section{Conclusion and Future Work}

This paper initiates the discussions of stochastically dominant-truthful (SD-truthful) peer prediction mechanisms, which ensure that in equilibrium, the score distribution under truth-telling first-order stochastically dominates any other strategy.
Unlike traditional peer prediction mechanisms, which rely on linear utility assumptions, SD-truthful mechanisms can preserve truthfulness for any increasing utility function.
We show that existing mechanisms fail to naturally achieve SD-truthfulness and propose a rounding method to enforce it.
However, this process often reduces the sensitivity of the original mechanism, making it less efficient in practice.
Additionally, we introduce the Enforcement Agreement (EA) mechanism, which is empirically shown to have the highest sensitivity among all SD-truthful mechanisms in the binary-signal setting\textemdash except when the signal prior is nearly uniform.

Future work is needed to investigate more sensitive SD-truthful mechanisms beyond the binary-signal setting.
A promising next step is to explore approximate SD-truthfulness, which allows truth-telling to lose a small utility for some utility functions.
This relaxation could lead to mechanisms with better sensitivity.
Furthermore, SD-truthfulness alone does not ensure that truth-telling maximizes expected rewards in a tournament (rank-order rewards). Because agents’ scores are correlated, a manipulation strategy with a dominated score distribution may still raise an agent’s chance of ranking above others by altering these correlations. An important direction for future work is therefore the design of peer prediction mechanisms that provide incentive guarantees under tournaments.

\section*{Broader Impact}
\label{sec:bi}

On the positive side, our method enhances the reliability and efficiency of crowdsourced data collection, which supports more trustworthy machine learning systems and reduces reliance on expert annotation. 
However, mechanisms that incentivize effort may unintentionally disadvantage contributors with limited time, resources, or familiarity with the task, raising fairness concerns. 
We aim to mitigate these risks by designing robust, broadly applicable mechanisms and promoting transparency in their deployment.

\section*{Acknowledgment}

This work was partly carried out at the DIMACS Center at Rutgers University and supported by NSF \#2313137.

\bibliography{main}
\bibliographystyle{plainnat}

\appendix


\section{Continued Related Work: Mutual Information Mechanisms}
\label{app:multual_info_SD}

We review several classic peer prediction mechanisms that evaluate agents by estimating mutual information, and illustrate why these mechanisms fail to achieve SD-truthfulness.
First, proposed by \citet{kong2019information}, we can estimate the empirical point-wise joint distribution between $\hat{X}_{a,\cdot}$ and $\hat{X}_{b,\cdot}$, using pairs of reports from two agents, $(\hat{X}_{a,j}, \hat{X}_{b,j})_{j\in [n]}$.
Note that because questions are i.i.d., we can use reports from all questions to estimate the same distribution.
Then, plugging the learned empirical joint distribution into an $f$-divergence operator returns a (biased) estimate of the $f$ point-wise mutual information.
The following is the score for the $f$-mutual information mechanism \citep{kong2019information}:
\[
    S^{fMI} = D_f\left(\Pr\left(\hat{X}_{a,\cdot}, \hat{X}_{b,\cdot}\right) \| \Pr\left(\hat{X}_{a,\cdot}\right)\Pr\left(\hat{X}_{b,\cdot}\right)\right).
\]
However, because the estimation is biased, the $f$-mutual information mechanism is only asymptotically truthful, meaning that it requires an infinite number of i.i.d.~questions to satisfy \cref{def:truthful}. When the number of tasks is small, there is a non-zero probability that truth-telling is not the best response. For example, if Alice observes ``yes'' on all her questions, which happens with a positive probability, then she knows that truth-telling will always result in a score of 0 regardless of Bob's responses. This is because the estimated point-wise joint distribution is the same as the product of marginal distributions. However, by misreporting ``no'' on some questions, she has a non-zero probability of receiving a positive score. 
In comparison, such deviations are discouraged under the enforced agreement (EA) mechanism because the mechanism will randomly flip the responses on behalf of the agent (see \cref{sec:ea}).

Several follow-up works, such as the Determinant Mutual Information (DMI) mechanism by \citet{kong2020dominantly} and the pairing mechanism by \citet{schoenebeck2020learning}, propose different ideas to estimate the mutual information. 
However, they are still not SD-truthful. 
Consider the strategy from \cref{exm:counter_ca}, where Alice always reports ``yes.'' 
Under both DMI and the pairing mechanism, this strategy always yields a score of zero.
In contrast, similar to the CA mechanism, there is a positive probability that truth-telling may result in a negative score. 
Thus, this constant-reporting strategy is not dominated by truth-telling under these mutual information-based mechanisms whose score can take negative values.


\section{Sensitivity of Prior Mechanisms Under Partition Rounding}
\label{app:sensitivity_partition}

We now map the exemplary mechanisms discussed in \cref{sec:prior} into the partition rounding framework.
In particular, we specify the key parameters $K$, $S_{\inf}$, $S_{\sup}$, and $m_i(e)$ for each mechanism.
Consider an information structure with joint distribution $J$, and the marginal distribution of Alice and Bob being $M^a$ and $M^b$, respectively.
Furthermore, let $M_{\sigma,\sigma'} = M^a_\sigma M^b_{\sigma'}$ denote the product of marginal distributions.

\paragraph{Output Agreement}
An individual score of OA, $S^{OA}_i$, is 1 if both agents agree on a question $i$ and 0 otherwise.
Therefore, OA is the partition rounding reduction of itself with $K=n$, $S_{\inf}=0$, and $S_{\sup}=1$.
By definition, when Alice exerts effort, agents' signals on the same question is sampled from the joint distribution $J$; 
while if Alice does not exert effort, she will sample a signal from the prior, meaning that agents' signals on the same question is sampled from $M$.
The expectation of an individual score at the effort level $e$ is thus 
\[m_i^{OA}(e) = e\cdot\Pr(X_{a,i}=X_{b,i}) + (1-e)\cdot \Pr(\mu(X_{a,i})=\mu(X_{b,i})) = e\cdot \mathrm{tr}(J) + (1-e)\cdot \mathrm{tr}(M).\]
This means that the derivative of the expected score w.r.t.~effort $e$ is 
\[\nabla m^{OA}_i(e) =\mathrm{tr}(J-M)=\mathrm{tr}(\Delta),\]
where $\Delta$ is the difference between the joint distribution and the product of marginal distributions, and $\mathrm{tr}(\cdot)$ represents the trace of a matrix, i.e.~the sum of the diagonal entries.


\paragraph{Peer Truth Serum}
Similar to OA, each partition of PTS is composed of a single question so that $K=n$, $S_{\inf}=0$, and $S_{\sup}=\frac{1}{M^a_{\sigma_{\min}}}$ where $\sigma_{\min}$ is the signal occurs with the smallest probability in prior.
The expected individual score is 
\[m_i^{PTS}(e)=\sum_{\sigma\in\Sigma}e\cdot \frac{J_{\sigma,\sigma}}{M^a_\sigma} + (1-e)\cdot M^b_\sigma=1+e\cdot\sum_{\sigma\in\Sigma} \frac{\Delta_{\sigma,\sigma}}{M^a_\sigma};\qquad \nabla m^{PTS}_i(e) =\sum_{\sigma\in\Sigma} \frac{\Delta_{\sigma,\sigma}}{M^a_\sigma}.\]

\paragraph{Correlated Agreement}
The CA mechanism requires two questions to compute an individual score, meaning that $K=n/2$.
By \cref{prop:ca_sdt_single} and \cref{lem:Kto1}, when the signal space is binary, scoring Alice using $S^{CA} = \sum_{i\in [K]}S^{CA}_i$ without rounding is SD-truthful.
However, when $|\Sigma|>2$, we have to use the partition rounding reduction of CA to obtain SD-truthful where $S_{\inf}=-1$ and $S_{\sup}=1$.
On the bonus question, the probability of receiving a score of 1 is given by $\sum_{\sigma,\sigma'}\left(e\cdot J_{\sigma,\sigma'}+(1-e)\cdot M_{\sigma,\sigma'}\right)T_\Delta(\sigma,\sigma')$.
On the penalty question, the probability of penalizing a score of 1 is given by $\sum_{\sigma,\sigma'}\left(e\cdot M_{\sigma,\sigma'}+(1-e)\cdot M_{\sigma,\sigma'}\right)T_\Delta(\sigma,\sigma')$.
Combining these,
\[m_i^{CA}(e)=e\cdot\sum_{\sigma,\sigma'\in\Sigma}\Delta_{\sigma,\sigma'}\;T_\Delta(\sigma,\sigma')=e\cdot\sum_{\sigma,\sigma'\in\Sigma}(\Delta_{\sigma,\sigma'})^+;\qquad \nabla m^{CA}_i(e) =\sum_{\sigma,\sigma'\in\Sigma}(\Delta_{\sigma,\sigma'})^+,\]
where $(x)^+ = x$ if $x>0$ and 0 otherwise.
Putting these parameters into \cref{eq:delta_partition} returns the sensitivity of the partition rounding reduction for CA.
However, note that by \cref{prop:ca_sdt_single}, we do not need to round the individual CA scores when $|\Sigma|=2$.
This motivates a variant of an SD-truthful implementation of CA that may have a slightly larger sensitivity in the binary-signal setting. 
We will introduce and test this implementation in \cref{sec:experiment}.


We summarize the limitations of the above three SD-truthful mechanisms.
\begin{itemize}
    \item OA does not require the partition rounding reduction to achieve SD-truthfulness and thus is likely to have a high sensitivity. 
    However, OA is SD-truthful only when signals are self-dominating.
    \item PTS requires rounding to ensure SD-truthfulness. The normalizing factor $\frac{1}{M^a_{\sigma_{\min}}}$ implies that PTS will be less sensitive when the prior distribution of signals is biased.
    \item For the CA mechanism, only bonus questions contribute to the sensitivity of the mechanism, while penalty questions are used to enforce truthfulness.
    However, CA requires two questions to compute a single individual score, which means $K=n/2$.\footnote{We emphasize that increasing $K$ by rerunning partitions is not SD-truthful (\cref{exm:counter_ca}).}
    Therefore, even without rounding, e.g.~when $|\Sigma|=2$, the number of effective questions is only one half of the available questions.
    Moreover, the rounding process divides each score by a factor of 2, which further harms the sensitivity of the mechanism.
\end{itemize}

\section{The Enforced Agreement Mechanism (Continued)}
\label{app:ea}

\subsection{Arbitrary Signal Space}

Unfortunately, we show that the above idea does not generalize beyond the binary-signal setting.
For a general setting with $|\Sigma|=c$, the enforced agreement mechanism can be characterized by an \emph{enforced marginal distribution} $\Phi = (n_0,\ldots, n_{c-1})$ and an \emph{enforcement rule} that ensures an agent's reports to follow $\Phi$.

Let Alice's report vector $\hat{X}_a$ have an empirical marginal distribution $\Phi_{\hat{X}_a} = (\hat{m}_0,\ldots, \hat{m}_{c-1})$.
The enforcement rule can be characterized as a $c\times c$ matrix $\rho\left(\hat{X}_a\right)$ (or simply $\rho$ if there is no confusion), such that $\rho_{i,j}$ denotes the number of randomly selected questions on which Alice reports $i$ and is flipped to $j$ by the mechanism.
By construction, $\rho_{i,j}\in \mathbb{Z}_{\geq 0}$ for any $i,j\in \Sigma$.
Furthermore, $\sum_i\rho_{i,j} = n_j$ for any $j\in \Sigma$ and $\sum_j\rho_{i,j} = \hat{m}_i$ for any $i\in \Sigma$.

Suppose Alice's signal vector $X_a$ has a marginal distribution $\Phi_a = (m_0,\ldots, m_{c-1})$.
Under the enforcement rule, while reasoning the distribution the of EA score, every (randomized) strategy can be characterized by (a mixture of) \emph{manipulation matrix}.
The entry $\vartheta_{i,j}$ of a manipulation matrix denotes the number of randomly selected questions where Alice observes $i$ but is flipped to $j$ either by the mechanism or by herself.
Similarly, $\vartheta_{i,j}\in \mathbb{Z}_{\geq 0}$ for any $i,j\in \Sigma$, $\sum_i\vartheta_{i,j} = n_j$ for any $j\in \Sigma$, and $\sum_j\vartheta_{i,j} = m_i$ for any $i\in \Sigma$.
In particular, let $\vartheta^\tau(\rho)$ denote the manipulation matrix corresponding to truth-telling under the enforcement rule $\rho$.

Given a manipulation matrix $\vartheta$, the final score of Alice is thus the average of $c^2$ binomials:
\begin{align*}
    S^{EA}(\theta) \sim \frac{1}{n}\sum_{i,j\in \Sigma} \mathrm{Bin}(\vartheta_{i,j}, p_{ij}),
\end{align*}
where  $p_{ij}=\Pr(X_b=j\mid X_a=i)$ is the conditioned probability of agents' signals.

To obtain SD-truthfulness, we have to construct an enforcement rule $\rho$ such that for any $p$ (under some assumptions), any $X_a$, and any feasible $\vartheta$, the random variable $S^{EA}(\theta)$ is first-order stochastically dominated by $S^{EA}(\tau)$.
This is particularly challenging when $c$ is large because $\vartheta$ has $c^2 - 2c + 1$ degrees of freedom, yet we must identify a $\rho$ such that $\vartheta^\tau(\rho)$ dominates any other $\vartheta$.

\paragraph{Three signals}
We show that even for the three-signal setting, EA is SD-truthful only under a very special type of information structure.
\begin{definition}
    Agents' signals are uniformly self-predicting if $\Pr(X_b = i\mid X_a = i)>\Pr(X_b = i\mid X_a = j) = \Pr(X_b = i\mid X_a = k)$ for any $j\neq i$ and $k\neq i$.
\end{definition}

Consider the following enforcement rule:
\begin{itemize}
    \item \emph{Alice over-reports one signal.}\; If $\hat{m}_i\ge n_i$ while $\hat{m}_j\le n_j$ for any $j\neq i$, the enforcement rule will randomly select $n_j -\hat{m}_j$ questions on which Alice reports $i$ and flip them to $j$ for $j\in\Sigma$ and $j\neq i$.
    \item \emph{Alice over-reports two signals.}\; If $\hat{m}_j< n_j$ while $\hat{m}_i\ge n_i$ for any $i\neq j$, the enforcement rule will randomly select $\hat{m}_i-n_i$ questions on which Alice reports $i$ and flip them to $j$ for $i\in\Sigma$ and $i\neq j$.
\end{itemize}
This enforcement rule flips the minimum number of reports to enforce $\Phi$ and thus is named the \emph{minimal enforcement rule}, denoted as $\rho^*$.

\begin{prop}\label{prop:impossible}
     If $|\Sigma|=3$ and signals are self-predicting, the enforcement agreement mechanism is SD-truthful for any $\Phi$ if and only if it applies the minimal enforcement rule and signals are uniformly self-predicting.
\end{prop}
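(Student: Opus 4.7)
The plan is to prove the two directions separately using a common column-wise decomposition. Conditional on Alice's signal vector $X_a$ with empirical counts $m = (m_0, m_1, m_2)$, any feasible manipulation matrix $\vartheta$ has row sums $m_i$ and column sums $n_j$, and the conditional score is $\frac{1}{n}\sum_{i,j}\mathrm{Bin}(\vartheta_{i,j}, p_{i,j})$, where the Bernoullis are mutually independent because Bob's signals on distinct questions are conditionally i.i.d.\ given $X_a$.

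For sufficiency ($\Leftarrow$), uniform self-prediction collapses every off-diagonal entry in column $j$ to a single value $b_j < a_j := p_{j,j}$. Hence the column-$j$ contribution depends on $\vartheta$ only through $\vartheta_{j,j}$, reducing to $\mathrm{Bin}(\vartheta_{j,j}, a_j) + \mathrm{Bin}(n_j - \vartheta_{j,j}, b_j)$. A one-step coupling---swapping a single $\mathrm{Bern}(b_j)$ for a $\mathrm{Bern}(a_j)$---shows this quantity is FOSD-monotone in $\vartheta_{j,j}$. The transportation constraints force $\vartheta_{j,j} \le \min(m_j, n_j)$, and the minimal rule $\rho^*$ is precisely defined to attain this upper bound in each enumerated scenario. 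Column-wise FOSD then aggregates across the independent columns (if $X_i \succeq_{\mathrm{FOSD}} Y_i$ for independent pairs then $\sum X_i \succeq_{\mathrm{FOSD}} \sum Y_i$ by coupling), and mixing over $X_a$ preserves FOSD, yielding $S^{EA}(\tau) \succeq_{\mathrm{FOSD}} S^{EA}(\theta)$ for every $\theta$.

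For necessity ($\Rightarrow$), I argue the contrapositive via two counterexamples. If SP is non-uniform, pick a column $j$ and rows $i \ne k$ (both distinct from $j$) with $p_{i,j} > p_{k,j}$; then a small instance such as $n=2$, $\Phi = (1,1,0)$ with signals $(1,2)$ and an explicit $p$ satisfying SP but $p_{1,0} \ne p_{2,0}$ lets Alice swap her two reports, rerouting the mechanism's forced flip from a $\mathrm{Bern}(p_{k,j})$ into a $\mathrm{Bern}(p_{i,j})$; a direct calculation shows $\Pr(S(\theta) \ge 2) > \Pr(S(\tau) \ge 2)$, so FOSD fails at the top atom. If instead $\rho$ differs from $\rho^*$ on some input $\hat X_a$ with marginal $\hat m$, then $\vartheta^\tau(\rho)_{j,j} < \min(\hat m_j, n_j)$ for some $j$; on a prior placing positive mass on $m = \hat m$, a permutation-type strategy that relabels on Alice's side the very signal $\rho$ shuffles away causes enforcement to absorb the shuffle and restore the maximal diagonal---so, by running the sufficiency argument in reverse, $S^{EA}(\theta)$ strictly FOSDs $S^{EA}(\tau)$ conditional on $m = \hat m$, and unconditional FOSD fails for this $\Phi$.

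The hardest part will be the non-minimal-rule construction: turning an arbitrary non-minimal flip into an exploitable deviation using only per-signal strategies, since $\theta$ cannot condition on the question index. I plan to reduce this to an elementary case by observing that every deviation from minimality trades one diagonal $\mathrm{Bern}(a_j)$ for one off-diagonal $\mathrm{Bern}(b_{j'})$ in the same column, and a matching relabeling of Alice's signal alphabet inverts that trade after $\rho$ acts. The remaining pieces---the explicit numerical verification for the non-uniform SP case and confirming that the conditional FOSD gain survives averaging over $X_a$---are routine bookkeeping.
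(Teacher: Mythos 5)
Your sufficiency argument takes a genuinely different and cleaner route than the paper. The paper proves sufficiency by writing out two case-specific truth-telling matrices $\vartheta^\tau(\rho^*)$, parameterizing the general $\vartheta$ with four free variables $t_1,\ldots,t_4$, and then exhibiting an explicit Bernoulli-by-Bernoulli coupling table for each case. Your column-wise decomposition bypasses that bookkeeping: under uniform self-prediction every column-$j$ block collapses to $\mathrm{Bin}(\vartheta_{j,j},a_j)+\mathrm{Bin}(n_j-\vartheta_{j,j},b_j)$, one-step coupling gives FOSD-monotonicity in $\vartheta_{j,j}$, the constraints give $\vartheta_{j,j}\le\min(m_j,n_j)$, the minimal rule attains this simultaneously for all $j$ when $|\Sigma|=3$, and independence of questions conditional on $X_a$ lets you aggregate columns. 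This is correct, shorter, and makes the structural reason for sufficiency visible (the only degree of freedom that matters is the diagonal mass, and uniform SP is exactly what kills the off-diagonal dependence). The paper's version buys nothing over this except explicitness.

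For necessity your approach is essentially the same as the paper's (exhibit a deviation that beats truth-telling), but two things need tightening. First, ``Alice swaps her two reports'' is not a strategy in $\Theta$: strategies map signals to (distributions over) reports and cannot condition on the question index. What you actually need is the signal permutation $\theta(1)=2,\theta(2)=1,\theta(0)=0$; this happens to produce the same report vector as the index swap when $X_a=(1,2)$, so the conditional calculation you describe still goes through, but the deviation must be stated as a signal map. Second, and more substantively, both of your necessity cases establish that $S(\theta)\mid X_a=\hat X_a$ fails to be FOSD-dominated by $S(\tau)\mid X_a=\hat X_a$, and then assert that unconditional FOSD fails. This does not follow automatically: failing conditional FOSD on a positive-probability event does not preclude unconditional FOSD, because the deviation's distribution could be strictly worse on the complementary event in a way that restores dominance after mixing. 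For the proof to close, you need to either (i) choose a threshold $t$ (e.g.~the top atom $S=1$) and show the unconditional tail probability inequality, summing over all $X_a$ with the prior weights, or (ii) restrict $\theta$ so that it coincides with $\tau$ off the bad event, which a fixed signal map cannot do. The paper's own necessity proof conditions on $X_a=\hat X_a$ in exactly the same way and leaves this step implicit, so this is a shared looseness rather than a flaw specific to your writeup --- but if you are filling in the ``routine bookkeeping,'' this is the non-routine part you should address.
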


We defer the proof to \cref{app:three}.
Our results imply that achieving SD-truthfulness using the EA mechanism is infeasible for arbitrary information structures beyond the binary-signal setting.
Investigating alternative approaches, such as mechanism designs that leverage additional structure in the information space or stronger assumptions on agent strategies, remains an interesting direction for future work.

\subsection{Truthfulness of EA}
\label{app:ea_truthful}

Although the EA mechanism is not SD-truthful when $|\Sigma|>2$, we show that it is still possible to obtain truthfulness as long as the information structure is known.

Given a report vector of Alice $\hat{X}_a$ with an empirical distribution $\Phi_{\hat{X}_a} = (\hat{m}_0,\ldots, \hat{m}_{c-1})$, EA will apply the \emph{IP enforcement rule} by solving the following integer programming.
\begin{align*}
    \max_{\vartheta} \quad & \sum_{i,j\in\Sigma} \vartheta_{i,j} \;p_{ij} \\
    \text{s.t.} \quad & \sum_{i \in \Sigma} \vartheta_{i,j} = n_j, \quad \forall j \in \Sigma, \\
    & \sum_{j \in \Sigma} \vartheta_{i,j} = \hat{m}_i, \quad \forall i \in \Sigma,\\
    & \vartheta_{i,j} \in \mathbb{Z}_{\geq 0}, \quad \forall i, j\in \Sigma.
\end{align*}

Suppose the optimal solution is $\vartheta^*(\hat{X}_a)$.
The mechanism will then randomly select $\vartheta^*(\hat{X}_a)_{i,j}$ questions on which $\hat{X}_a=i$ and flip them to $j$, for any $i,j\in \Sigma$.
After the enforcement step, Alice will be scored based on the output agreement mechanism.

\begin{prop}\label{prop:ea_truthful}
    For any enforcement $\Phi$ and information structure, if $\Pr(X_b\mid X_a)$ is known, the enforcement agreement mechanism with the IP enforcement rule is truthful.
\end{prop}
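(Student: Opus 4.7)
The plan is to reduce truthfulness to the fact that the LP relaxation of the mechanism's integer program is tight, and to exploit a natural \emph{effective manipulation matrix} induced by any strategy.

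First, fix a realization of Alice's signal vector $X_a$ with histogram $m=(m_0,\ldots,m_{c-1})$ and her (possibly randomized) reports $\hat{X}_a$ with histogram $\hat{m}$, and let $Y_{i,j}$ count the questions on which she observed $i$ and reported $j$ (so $\sum_j Y_{i,j}=m_i$ and $\sum_i Y_{i,j}=\hat{m}_j$). Given $\hat{X}_a$, the mechanism produces the IP solution $\vartheta^*(\hat{X}_a)$ with row sums $\hat{m}$ and column sums $\Phi$, then uniformly at random picks which reports to flip. Marginalizing over the flipping and Bob's response gives
\[
\mathbb{E}\bigl[S^{EA}(\theta)\bigm|X_a,\hat{X}_a\bigr]
\;=\;\frac{1}{n}\sum_{i,k\in\Sigma}\tilde{\vartheta}_{i,k}\,p_{ik},
\qquad \text{where}\qquad
\tilde{\vartheta}_{i,k}\;=\;\sum_{j\in\Sigma}Y_{i,j}\cdot\frac{\vartheta^*(\hat{X}_a)_{j,k}}{\hat{m}_j}.
\]
A short check yields $\sum_k\tilde{\vartheta}_{i,k}=m_i$ and $\sum_i\tilde{\vartheta}_{i,k}=n_k$, so $\tilde{\vartheta}$ is feasible for the \emph{LP relaxation} of the mechanism's IP when the row-sum constraints are replaced by the \emph{true} signal histogram $m$ (rather than the reported $\hat{m}$).

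The key step is to invoke total unimodularity: the constraint matrix of the IP is the node-edge incidence matrix of a bipartite transportation problem, hence totally unimodular, so for any integer marginals $(m,\Phi)$ the LP relaxation admits an integer optimum. Applied to the polytope with marginals $(m,\Phi)$,
\[
\sum_{i,k}\tilde{\vartheta}_{i,k}\,p_{ik}
\;\le\;\max\!\Bigl\{\sum_{i,k}\vartheta_{i,k}\,p_{ik}:\vartheta\in\mathbb{Z}_{\ge 0}^{c\times c},\ \textstyle\sum_j\vartheta_{i,j}=m_i,\ \sum_i\vartheta_{i,j}=n_j\Bigr\}.
\]
Under truth-telling $\hat{X}_a=X_a$, so $\hat{m}=m$, the matrix $Y$ is diagonal, and $\tilde{\vartheta}$ coincides with $\vartheta^*(X_a)$ itself, which attains the right-hand side. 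Hence $\mathbb{E}[S^{EA}(\theta)\mid X_a]\le \mathbb{E}[S^{EA}(\tau)\mid X_a]$ for every realization of $X_a$, and averaging over $X_a$ completes the argument.

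The main obstacle I anticipate is bookkeeping rather than mathematics: cleanly separating the two sources of randomness (Alice's strategy and the mechanism's uniform flipping), and tracking which marginals ($m$ versus $\hat{m}$) appear in which IP, so that the \emph{expected} effective matrix $\tilde{\vartheta}$ unambiguously lies in the transportation polytope with marginals $(m,\Phi)$. Once that setup is in place, total unimodularity of the transportation constraints does all the real work.
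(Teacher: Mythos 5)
Your proof is correct, and while it shares the paper's high-level strategy (truth-telling yields the IP-optimal manipulation matrix, so no strategy can beat it), you fill in a real gap that the paper's one-sentence proof glosses over. The paper simply asserts that since the mechanism solves the IP optimally, no deviation can help; but the IP runs on the \emph{reported} histogram $\hat m$, not the true histogram $m$, and what actually governs Alice's expected score is the composition of her strategy with the mechanism's random flipping. You make this precise by defining the effective manipulation matrix $\tilde\vartheta_{i,k}=\sum_j Y_{i,j}\,\vartheta^*(\hat X_a)_{j,k}/\hat m_j$, verifying it lies in the transportation polytope with the \emph{correct} marginals $(m,\Phi)$, and then invoking total unimodularity to show its (generally fractional) objective value cannot exceed the integer optimum. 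That integrality step is exactly what the paper leaves implicit. One small remark: you can avoid TU entirely by conditioning on the realization of the mechanism's uniform flipping --- each realized (observed, enforced) count matrix is itself an integer point of the transportation polytope with marginals $(m,\Phi)$, so each realization's conditional expected score is already bounded by the IP optimum, and the bound survives averaging. Either way, your write-up is a cleaner and more complete argument than the one in the paper.
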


The proof is straightforward.
Note that to prove truthfulness, it is sufficient to show that truth-telling maximizes the expected score which is $\sum_{i,j\in\Sigma}\vartheta^\tau_{i,j}\;p_{ij}$.
Because the mechanism will manipulate agents' reports in an optimal way by solving the above integer programming, no strategy can achieve a higher expected score than truth-telling.

Note that when $|\Sigma|=2$ and signals are self-predicting, the optimal solution to the above IP does not depend on the information structure $p_{ij}$. 
In particular, the IP enforcement rule always flips the over-reported signal to enforce $n_0$ zeros and $n-n_0$ ones and does not flip any of the under-reported signal.
This echos \cref{thm:ea_binary} which suggests the EA mechanism is SD-truthful in the binary-signal setting without the knowledge of $p$.

\begin{remark}[Connection to Peer Truth Serum \cite{faltings2017pts}]
    As illustrated in \cref{subsec:pts}, PTS can be viewed as an output agreement mechanism (\cref{subsec:oa}) where the score for agreement is weighted inversely by each signal’s prior.
    This re-weighting shifts the probability of agreement from the joint distribution $\Pr(X_a, X_b)$, which demands self-dominating signals to ensure truthful reporting, to the conditional distribution $\Pr(X_b\mid X_a)$, which requires self-predicting signals.
    
    In contrast, EA avoids using the prior $\Pr(X_a)$ by controlling the marginal distribution of Alice's reports.
    This approach similarly transforms the probability of agreement from the joint to the conditional distribution.
    However, the enforcement process creates additional incentive issues when the signal space is large which requires complete knowledge of the underlying information structure.
    Consequently, if $|\Sigma|=2$ and signals are self-predicting, EA can obtain a stronger truthful guarantee (SD-truthfulness) without any prior knowledge; but for $|\Sigma|>2$, it requires more extensive structural information to guarantee truthfulness.
\end{remark}

\subsection{The Optimal Enforcement}
\label{subsec:opt_enforce}

We have established that when signals are binary and self-predicting, EA with any enforcement $\Phi$ is SD-truthful.
In this subsection, we examine how the choice of enforcement, $\Phi=(n_0, n-n_0)$, influences the sensitivity of EA, and in particular, what is the sensitivity-maximizing enforcement.

Fixing an information structure, the expected score given by EA, $S^{EA}(n_0,e)$, can be viewed as a function of the number of questions where the answers are enforced to zero, $n_0$, and Alice's effort level, $e$.
We first present an intermediate result suggesting that the standard deviation of $S^{EA}$ is independent of $n_0$.
This implies that the enforcement affects the sensitivity of EA only through $\nabla \mathbb{E}[S^{EA}(e)]$.

\begin{lemma}\label{lem:std_EA}
    For a fixed information structure, $\text{std}\left(S^{EA}(n_0,e)\right)$ is independent of $n_0$, i.e.~$\text{std}\left(S^{EA}(n_0,e)\right)=\text{std}\left(S^{EA}(n'_0,e)\right)$ for any $n_0, n'_0\in \{0,1,\ldots,n\}$.
\end{lemma}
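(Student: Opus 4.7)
The plan is to reduce the variance computation to a single observable on a $2\times 2$ contingency table. Let $N_{ab}$ denote the number of questions where the enforced Alice report $\hat{X}'_{a}$ equals $a$ and Bob's report $X_b$ equals $b$, and let $M'_0 = N_{00} + N_{10}$ be the total count of zeros in Bob's reports. Because the enforcement fixes the row marginal of the table at $(n_0, n-n_0)$, one entry determines the whole table; in particular $n \cdot S^{EA} = N_{00} + N_{11} = 2 N_{00} + (n - n_0) - M'_0$, so
\[
\text{Var}\bigl(n \cdot S^{EA}\bigr) \;=\; 4\,\text{Var}(N_{00}) \;-\; 4\,\text{Cov}(N_{00}, M'_0) \;+\; \text{Var}(M'_0).
\]
Since $M'_0$ is a binomial whose parameters depend only on $e$ and Bob's prior marginal, $\text{Var}(M'_0)$ is $n_0$-free, and the lemma reduces to showing that $\text{Var}(N_{00}) - \text{Cov}(N_{00}, M'_0) = -\text{Cov}(N_{00}, N_{10})$ is invariant in $n_0$.

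To attack this I would condition on the pre-enforcement signal profile $A = (A_{00}, A_{01}, A_{10}, A_{11})$ of the i.i.d.~pairs $(\tilde{Y}_i, X_{b,i})$, which is multinomial with cell probabilities $\tilde{J}_{ij}(e) = e\,J_{ij} + (1-e)\,M_{ij}$ independent of $n_0$. Given $A$, the random flips imposed by enforcement are hypergeometric: when $M_0 = A_{00} + A_{01} > n_0$, the mechanism selects a uniformly random $n_0$-subset out of the $M_0$ zero-positions of $\tilde{Y}$, so $N_{00} \sim \mathrm{Hypergeom}(M_0, A_{00}, n_0)$; a symmetric expression, with $N_{00} = A_{00} + H$ and $H \sim \mathrm{Hypergeom}(n - M_0, A_{10}, n_0 - M_0)$, handles the complementary case. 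Plugging the hypergeometric means and variances into the law of total (co)variance and then averaging against the multinomial distribution of $A$ produces an explicit expression which, term by term, should exhibit cancellation of every factor that carries $n_0$.

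The main obstacle is precisely this cancellation. The $(M_0 - n_0)$ factor appearing in the hypergeometric variance, together with the $n_0/M_0$ factor in its mean, has to combine with the multinomial fluctuations of $A_{00}$ to produce a quantity referencing only $\tilde{J}(e)$ and $n$. A cleaner route --- and probably the intended one --- is to sidestep the case split between $M_0 > n_0$ and $M_0 \le n_0$ by a coupling argument: couple two runs of EA with enforcements $n_0$ and $n_0 + 1$ so that their resulting agreement-indicator vectors differ by a single controlled swap whose marginal contribution to the variance is zero. Exhibiting such a coupling would immediately give $\text{std}(S^{EA}(n_0, e)) = \text{std}(S^{EA}(n_0 + 1, e))$ for every $n_0$ and prove the lemma by induction.
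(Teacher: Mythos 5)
Your algebraic reduction is clean and correct: writing $n\,S^{EA} = 2N_{00} + (n - n_0) - M'_0$, with $M'_0 \sim \mathrm{Bin}(n, M^b_0)$ depending only on Bob, does reduce the lemma to showing $\text{Cov}(N_{00}, N_{10})$ is $n_0$-free. But you stop there: the ``hypergeometric-inside-multinomial cancellation'' and the ``single-swap coupling'' between enforcements $n_0$ and $n_0 + 1$ are described as plans rather than carried out, and it is exactly that step that would constitute the proof. For comparison, the paper's argument takes a different, coarser conditioning: it conditions on $m_0$ (the pre-enforcement count of zeros in Alice's reports), shows $\text{Var}(S^{EA}\mid m_0) = \frac{1}{n^2}\bigl(m_0\,\eta_{00}(1-\eta_{00}) + (n - m_0)\,\eta_{11}(1-\eta_{11})\bigr)$ is $n_0$-free, and then asserts that the unconditional standard deviation follows.

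Before you invest more in either of your two routes, I would flag a problem that affects both your approach and the paper's. The paper's final inference only covers the term $\mathbb{E}[\text{Var}(S^{EA}\mid m_0)]$ of the law of total variance; the term $\text{Var}(\mathbb{E}[S^{EA}\mid m_0])$ is never addressed, and from the paper's own conditional expectations $\mathbb{E}[nS^{EA}\mid m_0]$ is a piecewise-linear ``tent'' in $m_0$ with the kink at $m_0 = n_0$, so its variance against the $n_0$-independent binomial law of $m_0$ does move with the kink location. Concretely, with $n = 2$, $M^a_0 = 1/2$, $\eta_{00} = 0.8$, $\eta_{11} = 0.7$, one computes $\text{Var}(nS^{EA}) = 0.495$ for $n_0 \in \{0, 2\}$ but $0.4325$ for $n_0 = 1$, which in your parametrization corresponds to $\text{Cov}(N_{00}, N_{10}) = 0$ versus $1/64$. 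So the covariance you reduced to is not actually invariant in $n_0$: neither your explicit cancellation nor your coupling can close the argument as the statement is written, and the lemma as stated appears to need an extra hypothesis or a reinterpretation of which variance is being held fixed.
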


We next analyze how the derivative of the expected score depends on the choice of enforcement.
We find that the sensitivity-maximizing enforcement $n_0$ is determined by a quantile of a binomial distribution with success probability $\Pr(X_a=0)$, where the quantile depends on the information structure.

\begin{prop}\label{prop:opt_enforce}
    If $|\Sigma| = 2$ and signals are self-predicting, the sensitivity-maximizing enforcement of the EA mechanism $\Phi=(n_0, n-n_0)$ satisfies that 
    \begin{equation*}
        n_0 = \max\left\{k\in \{0,1,\ldots,n\}: F(k)<\frac{\eta'_{00} - \eta'_{01}}{\eta'_{00} - \eta'_{01} + \eta'_{11} - \eta'_{10}} \right\},
    \end{equation*}
    where $F(k)=\sum_{i=0}^k\binom{n}{i} \Pr(X_a=0)^i(1-\Pr(X_a=0))^{n-i}$ is the c.d.f.~of a binomial random variable with success probability $\Pr(X_a=0)$, and $\eta'_{ij} = \Pr(X_b=j\mid X_a=i)-\Pr(X_b = j)$.
\end{prop}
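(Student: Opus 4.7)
The plan is to reduce to a one-variable discrete optimization via \cref{lem:std_EA}, differentiate the expected score in $e$, and analyze the resulting forward difference in $n_0$. By \cref{lem:std_EA}, $\mathrm{std}(S^{EA}(n_0, e))$ is constant in $n_0$, so the sensitivity-maximizing enforcement is the one that maximizes $\nabla \mathbb{E}[S^{EA}(n_0, e)]$ over $n_0 \in \{0, \ldots, n\}$.

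The key structural fact I would exploit is that Alice's unenforced report $\hat{X}_{a,k}$ equals her observed signal with probability $e$ and is otherwise resampled from the prior; either way its marginal distribution equals $\Pr(X_a)$. Hence $\hat{m}_0 \coloneqq \sum_k \mathbf{1}[\hat{X}_{a,k}=0] \sim \mathrm{Bin}(n, \Pr(X_a=0))$, with a distribution that does not depend on $e$. Because the mechanism only sees $\hat{X}_a$ and its random flips are a function of $\hat{m}_0$ alone, the conditional probability $\alpha_{j'j}(n_0) \coloneqq \Pr(\tilde{X}_{a,1} = j \mid \hat{X}_{a,1} = j')$ is also $e$-independent. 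Moreover, a short exchangeability argument shows that, given $\hat{X}_{a,1} = j'$, the flip of question $1$ is independent of $X_{a,1}$---this uses that the remaining $n-1$ pairs $(X_{a,k}, \hat{X}_{a,k})$ are i.i.d.\ and independent of $(X_{a,1}, \hat{X}_{a,1})$.

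Expanding $\mathbb{E}[S^{EA}] = \sum_{i,j} p_{ij}\, \Pr(X_{a,1}=i, \tilde{X}_{a,1}=j)$ and using the decomposition $\Pr(X_{a,1}=i, \tilde{X}_{a,1}=j) = \Pr(X_a=i) \sum_{j'} [\,e\,\mathbf{1}[i=j'] + (1-e)\Pr(X_a=j')\,]\, \alpha_{j'j}(n_0)$, differentiating in $e$ collapses the prior terms and gives
\[
\nabla \mathbb{E}[S^{EA}(n_0, e)] \;=\; \sum_{j'} \Pr(X_a=j') \sum_j \alpha_{j'j}(n_0)\,\eta'_{j'j}.
\]
In the binary case, $\alpha_{j'0} + \alpha_{j'1} = 1$ and $\eta'_{j'0} + \eta'_{j'1} = 0$ simplify the inner sum to $(\alpha_{j'0} - \alpha_{j'1})\,\eta'_{j'0}$, and directly counting the enforcement's kept/flipped questions yields $\alpha_{00}(n_0) = \mathbb{E}[\min(\hat{m}_0, n_0)]/(n\Pr(X_a=0))$ and $\alpha_{10}(n_0) = \mathbb{E}[(n_0 - \hat{m}_0)^+]/(n\Pr(X_a=1))$.

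For the discrete optimization, I would use the identities $\mathbb{E}[\min(\hat{m}_0, n_0{+}1)] - \mathbb{E}[\min(\hat{m}_0, n_0)] = 1 - F(n_0)$ and $\mathbb{E}[(n_0{+}1 - \hat{m}_0)^+] - \mathbb{E}[(n_0 - \hat{m}_0)^+] = F(n_0)$ to show that the forward difference $\nabla \mathbb{E}[S^{EA}(n_0{+}1, e)] - \nabla \mathbb{E}[S^{EA}(n_0, e)]$ reduces to a positive constant times $(1 - F(n_0))(\eta'_{00} - \eta'_{01}) - F(n_0)(\eta'_{11} - \eta'_{10})$. Under self-prediction, $\eta'_{00} - \eta'_{01} > 0$ and $\eta'_{11} - \eta'_{10} > 0$, and $F$ is strictly increasing, so this forward difference is strictly decreasing in $n_0$. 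Hence $\nabla \mathbb{E}[S^{EA}(n_0, e)]$ is unimodal, with maximizer determined by the threshold at which $F(n_0)$ crosses $(\eta'_{00} - \eta'_{01})/(\eta'_{00} - \eta'_{01} + \eta'_{11} - \eta'_{10})$, as stated. The main obstacle is the derivation of the clean derivative formula---specifically, justifying that, conditional on $\hat{X}_{a,1}$, the flip of question $1$ is independent of $X_{a,1}$, which lets $\alpha_{j'j}(n_0)$ be a single-question quantity despite the mechanism being a joint operation on all $n$ reports. After that, the binomial identities above make the optimization mechanical.
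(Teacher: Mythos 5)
Your plan is correct, and it reaches the same forward-difference criterion as the paper, but the intermediate derivation is genuinely different. The paper conditions on the latent count $m_0$ of Alice's observed zeros, writes $n\cdot S^{EA}$ (conditionally on $m_0$) as an explicit sum of three binomials in each of the two cases $m_0\le n_0$ and $m_0>n_0$, differentiates the resulting mixture of binomial means in $e$, and then takes the forward difference in $n_0$ term by term, which cleanly separates into a $(\eta'_{10}-\eta'_{11})$ contribution on $\{m_0\le n_0\}$ and a $(\eta'_{00}-\eta'_{01})$ contribution on $\{m_0>n_0\}$, giving $F(n_0)(\eta'_{10}-\eta'_{11})+(1-F(n_0))(\eta'_{00}-\eta'_{01})$. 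You instead reduce to a single question via exchangeability, encode the enforcement as flip probabilities $\alpha_{j'j}(n_0)=\Pr(\tilde X_{a,1}=j\mid\hat X_{a,1}=j')$, justify (correctly, via the i.i.d.\ structure of the other $n-1$ pairs) that the flip of question $1$ is conditionally independent of $X_{a,1}$ given $\hat X_{a,1}$, derive the compact identity $\nabla\mathbb{E}[S^{EA}]=\sum_{j'}\Pr(X_a=j')\sum_j\alpha_{j'j}(n_0)\,\eta'_{j'j}$, express $\alpha_{00}$ and $\alpha_{10}$ as $\mathbb{E}[\min(\hat m_0,n_0)]/(nq)$ and $\mathbb{E}[(n_0-\hat m_0)^+]/(n(1-q))$, and then use the standard order-statistic increments $\mathbb{E}[\min(\hat m_0,n_0{+}1)]-\mathbb{E}[\min(\hat m_0,n_0)]=1-F(n_0)$ and $\mathbb{E}[(n_0{+}1-\hat m_0)^+]-\mathbb{E}[(n_0-\hat m_0)^+]=F(n_0)$; this produces the same forward difference up to a constant factor. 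Your route buys a cleaner general-$|\Sigma|$ derivative formula and avoids the two-case-by-$m_0$ bookkeeping, at the cost of needing the conditional-independence and exchangeability argument as an explicit lemma (which you correctly flag as the crux). One small caveat inherited from the paper: both you and the paper conclude by saying the maximizer is the largest $k$ with $F(k)$ below the threshold, but since a positive forward difference at $k$ means $g(k+1)>g(k)$, the unimodality argument actually places the maximizer one step above that $k$ (when $k<n$); this off-by-one is in the original proposition and proof, and your plan reproduces it rather than introducing a new error.
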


We defer the proofs of \cref{lem:std_EA} and \cref{prop:opt_enforce} to \cref{app:opt_enforce}. 

\section{The Matching Agreement Mechanism}
\label{app:matching}

The matching agreement mechanism is a follow-up of the correlated agreement mechanism aiming to address more complex cheating strategies \cite{zhang2023omni}.
Similar to CA, which determines ``agreement'' using the delta matrix (see \cref{subsec:ca}), MA determines ``agreement'' using a different method depending on the information structure.
\begin{definition}
    Let $\Gamma$ be a $|\Sigma|\times|\Sigma|\times|\Sigma|$ tensor where 
    \begin{equation*}
        \Gamma_{\sigma_1,\sigma_2,\sigma_3} = \Pr(X_a = \sigma_1 \mid X_b = \sigma_2) - \Pr(X_a = \sigma_1 \mid X_b = \sigma_3). 
    \end{equation*}
\end{definition}
The agreement function is thus $T_\Gamma = \text{Sign}(\Gamma)$, i.e.~$T_\Gamma\left(\sigma_1,\sigma_2,\sigma_3\right) = 1$ if $\Gamma_{\sigma_1,\sigma_2,\sigma_3} > 0$, and 0 otherwise.

Next, the mechanism randomly samples a bonus question $j$ and a penalty question $k$ uniformly at random.
An individual score computed using these two questions is thus $S^{MA}_i=T_\Gamma(\hat{X}_{a,j}, \hat{X}_{b,j}, \hat{X}_{b,k})$.
In words, the matching agreement mechanism awards Alice a score of 1 if, when predicting her response on question $b$, the posterior conditioned on Bob's response to the same question provides a better prediction than the posterior conditioned on Bob's response on a different question $q$.
Similar to CA, each bonus question can be paired with $n-1$ penalty questions. Therefore, the final score of MA is the average of $n\cdot (n-1)$ individual scores.

\paragraph{The Sensitivity of MA Under the Partition Rounding Reduction}
We further investigate the sensitivity of the MA mechanism under the partition reduction.
The MA mechanism requires two questions to compute an individual score, implying that $K=n/2$.
Clearly, $S_{\inf} = 0$ and $S_{\sup}=1$ by definition.
By \cref{eq:delta_partition}, to compute the sensitivity, we only need to compute the expected individual score $m_i(e)$ and its derivative.
When Alice exerts full effort, the probability of obtaining a score of 1 can be computed as follows:
\begin{align*}
    m_i^{MA}(1) &= \sum_{\sigma_1,\sigma_2,\sigma_3} \Pr(X_{a,j} = \sigma_1, X_{b,j} = \sigma_2, X_{b,k} = \sigma_3) \cdot T_\Gamma\left(\sigma_1,\sigma_2,\sigma_3\right)\\
    &= \sum_{\sigma_1,\sigma_2,\sigma_3} J_{\sigma_1,\sigma_2}M^b_{\sigma_3}\cdot T_\Gamma\left(\sigma_1,\sigma_2,\sigma_3\right)\\
    &= \frac{1}{2}\sum_{\sigma_1,\sigma_2,\sigma_3} J_{\sigma_1,\sigma_2}M^b_{\sigma_3}\cdot T_\Gamma\left(\sigma_1,\sigma_2,\sigma_3\right) + \frac{1}{2}\sum_{\sigma_1,\sigma_2,\sigma_3} J_{\sigma_1,\sigma_3}M^b_{\sigma_2}\cdot T_\Gamma\left(\sigma_1,\sigma_3,\sigma_2\right)\\  
    &= \frac{1}{2} + \frac{1}{2}\sum_{\sigma_1,\sigma_2,\sigma_3} \left(J_{\sigma_1,\sigma_2}M^b_{\sigma_3}-J_{\sigma_1,\sigma_3}M^b_{\sigma_2}\right)\cdot T_\Gamma\left(\sigma_1,\sigma_2,\sigma_3\right) \tag{Note that $T_\Gamma\left(\sigma_1,\sigma_2,\sigma_3\right) = 1 - T_\Gamma\left(\sigma_1,\sigma_3,\sigma_2\right)$.}\\
    &= \frac{1}{2} + \frac{1}{2} \sum_{\sigma_1,\sigma_2,\sigma_3} M^b_{\sigma_2}M^b_{\sigma_3}\;\Gamma_{\sigma_1,\sigma_2,\sigma_3}\;T_\Gamma\left(\sigma_1,\sigma_2,\sigma_3\right)\\
    &= \frac{1}{2} + \frac{1}{2} \sum_{\sigma_1,\sigma_2,\sigma_3} M^b_{\sigma_2}M^b_{\sigma_3}\;(\Gamma_{\sigma_1,\sigma_2,\sigma_3})^+
\end{align*}
where $(x)^+ = x$ if $x>0$ and 0 otherwise.

When Alice exerts no effort,
\begin{align*}
    m_i^{MA}(0) &= \sum_{\sigma_1,\sigma_2,\sigma_3} \Pr(\hat{X}_{a,j} = \sigma_1, X_{b,j} = \sigma_2, X_{b,k} = \sigma_3) \cdot T_\Gamma\left(\sigma_1,\sigma_2,\sigma_3\right)\\
    &= \sum_{\sigma_1,\sigma_2,\sigma_3} M^a_{\sigma_1}M^b_{\sigma_2}M^b_{\sigma_3}\cdot T_\Gamma\left(\sigma_1,\sigma_2,\sigma_3\right)\\
    &= \frac{1}{2}\sum_{\sigma_1,\sigma_2,\sigma_3} M^a_{\sigma_1}M^b_{\sigma_2}M^b_{\sigma_3}\cdot T_\Gamma\left(\sigma_1,\sigma_2,\sigma_3\right) + \frac{1}{2}\sum_{\sigma_1,\sigma_2,\sigma_3} M^a_{\sigma_1}M^b_{\sigma_2}M^b_{\sigma_3}\cdot T_\Gamma\left(\sigma_1,\sigma_3,\sigma_2\right)\\  
    &=\frac{1}{2} + \frac{1}{2}\sum_{\sigma_1,\sigma_2,\sigma_3} \left(M^a_{\sigma_1}M^b_{\sigma_2}M^b_{\sigma_3}-M^a_{\sigma_1}M^b_{\sigma_2}M^b_{\sigma_3}\right)\cdot T_\Gamma\left(\sigma_1,\sigma_2,\sigma_3\right) \tag{Note that $T_\Gamma\left(\sigma_1,\sigma_2,\sigma_3\right) = 1 - T_\Gamma\left(\sigma_1,\sigma_3,\sigma_2\right)$.}\\
    &= \frac{1}{2}
\end{align*}
The expected score while exerting effort $e$ is a convex combination of $m_i^{MA}(1)$ and $m_i^{MA}(0)$, i.e.~$m_i^{MA}(e) = \frac{1}{2} + \frac{e}{2}\sum_{\sigma_1,\sigma_2,\sigma_3} M^b_{\sigma_2}M^b_{\sigma_3}\;(\Gamma_{\sigma_1,\sigma_2,\sigma_3})^+$, and $\nabla m_i^{MA}(e) = \frac{1}{2}\sum_{\sigma_1,\sigma_2,\sigma_3} M^b_{\sigma_2}M^b_{\sigma_3}\;(\Gamma_{\sigma_1,\sigma_2,\sigma_3})^+$.

When signals are binary, we find that MA reduces to CA.
\begin{lemma}\label{lem:ma_ca}
    When signals are binary, $\sum_{\sigma_1,\sigma_2,\sigma_3} M^b_{\sigma_2}M^b_{\sigma_3}\;(\Gamma_{\sigma_1,\sigma_2,\sigma_3})^+ = \sum_{\sigma_1,\sigma_2} (\Delta_{\sigma_1,\sigma_2})^+$.
\end{lemma}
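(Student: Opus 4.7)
\textbf{Proof Plan for \cref{lem:ma_ca}.} The plan is to reduce the triple sum to a double sum via a direct algebraic simplification specific to the binary case. First I would note that since $M^b_{\sigma_2}, M^b_{\sigma_3} \ge 0$, we can pull the marginals inside the positive part, giving
\[
M^b_{\sigma_2}M^b_{\sigma_3}\,(\Gamma_{\sigma_1,\sigma_2,\sigma_3})^+ \;=\; \bigl(M^b_{\sigma_2}M^b_{\sigma_3}\Gamma_{\sigma_1,\sigma_2,\sigma_3}\bigr)^+.
\]
Expanding $\Gamma_{\sigma_1,\sigma_2,\sigma_3}=\frac{J_{\sigma_1,\sigma_2}}{M^b_{\sigma_2}}-\frac{J_{\sigma_1,\sigma_3}}{M^b_{\sigma_3}}$ turns the argument into $M^b_{\sigma_3}J_{\sigma_1,\sigma_2}-M^b_{\sigma_2}J_{\sigma_1,\sigma_3}$, which is antisymmetric in $(\sigma_2,\sigma_3)$ and in particular vanishes when $\sigma_2=\sigma_3$.

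Next I would use the binary assumption $|\Sigma|=2$ to rewrite the marginal identity $M^b_0+M^b_1=1$ and compute, for $\sigma_2\ne\sigma_3$,
\[
M^b_{\sigma_3}J_{\sigma_1,\sigma_2}-M^b_{\sigma_2}J_{\sigma_1,\sigma_3} \;=\; J_{\sigma_1,\sigma_2}(1-M^b_{\sigma_2})-M^b_{\sigma_2}(M^a_{\sigma_1}-J_{\sigma_1,\sigma_2}) \;=\; J_{\sigma_1,\sigma_2}-M^a_{\sigma_1}M^b_{\sigma_2} \;=\; \Delta_{\sigma_1,\sigma_2},
\]
using $M^a_{\sigma_1}=J_{\sigma_1,0}+J_{\sigma_1,1}$. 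This is the key identity: in the binary case, the $\Gamma$-tensor (properly reweighted) collapses to the $\Delta$-matrix.

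Finally, summing over the triple $(\sigma_1,\sigma_2,\sigma_3)$ and noting that for each fixed $\sigma_1$ and each value of $\sigma_2\in\{0,1\}$ there is exactly one choice of $\sigma_3\ne\sigma_2$ (with $\sigma_2=\sigma_3$ terms contributing zero), I obtain
\[
\sum_{\sigma_1,\sigma_2,\sigma_3} M^b_{\sigma_2}M^b_{\sigma_3}(\Gamma_{\sigma_1,\sigma_2,\sigma_3})^+ \;=\; \sum_{\sigma_1}\sum_{\sigma_2\ne\sigma_3}(\Delta_{\sigma_1,\sigma_2})^+ \;=\; \sum_{\sigma_1,\sigma_2}(\Delta_{\sigma_1,\sigma_2})^+,
\]
which is the desired identity. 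There is no real obstacle here; the only subtle point is recognizing that the step $M^b_{\sigma_3}J_{\sigma_1,\sigma_2}-M^b_{\sigma_2}J_{\sigma_1,\sigma_3}=\Delta_{\sigma_1,\sigma_2}$ uses both $\sigma_2\ne\sigma_3$ and the binary constraint $M^b_{\sigma_3}=1-M^b_{\sigma_2}$, which is exactly why the identity fails once $|\Sigma|\ge 3$.
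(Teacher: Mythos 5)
Your proof is correct and follows essentially the same route as the paper: drop the $\sigma_2=\sigma_3$ terms (since $\Gamma$ vanishes there), use the binary constraints $M^b_{1-\sigma_2}=1-M^b_{\sigma_2}$ and $J_{\sigma_1,1-\sigma_2}=M^a_{\sigma_1}-J_{\sigma_1,\sigma_2}$ to collapse the reweighted $\Gamma$-tensor to $\Delta_{\sigma_1,\sigma_2}$, and resum. Your exposition is slightly cleaner than the paper's (you explicitly justify pulling the nonnegative marginals inside $(\cdot)^+$ and flag the antisymmetry in $(\sigma_2,\sigma_3)$), but there is no substantive difference in the argument.
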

\begin{proof}
    First note that by definition, $\Gamma_{\sigma_1,\sigma_2,\sigma_3} = 0$ when $\sigma_2 = \sigma_3$.
    Therefore, when signals are binary, 
    \begin{align*}
        \sum_{\sigma_1,\sigma_2,\sigma_3} M^b_{\sigma_2}M^b_{\sigma_3}\;(\Gamma_{\sigma_1,\sigma_2,\sigma_3})^+ &= \sum_{\sigma_1,\sigma_2} M^b_{\sigma_2}M^b_{1-\sigma_2}\;(\Gamma_{\sigma_1,\sigma_2,1-\sigma_2})^+\\
        &= \sum_{\sigma_1,\sigma_2,\sigma_3} M^b_{\sigma_2}M^b_{\sigma_3}\;(\Gamma_{\sigma_1,\sigma_2,\sigma_3})^+ \\
        &= \sum_{\sigma_1,\sigma_2} (J_{\sigma_1,\sigma_2}M^b_{1-\sigma_2} - J_{\sigma_1,1-\sigma_2}M^b_{\sigma_2})^+\\
        &= \sum_{\sigma_1,\sigma_2} (J_{\sigma_1,\sigma_2}\;(1-M^b_{\sigma_2}) - ((M^a_{\sigma_1}-J_{\sigma_1,\sigma_2})\;M^b_{\sigma_2})^+ \tag{Signals are binary.}\\
        &= \sum_{\sigma_1,\sigma_2} (J_{\sigma_1,\sigma_2} - M^a_{\sigma_1}M^b_{\sigma_2})^+\\
        &= \sum_{\sigma_1,\sigma_2} (\Delta_{\sigma_1,\sigma_2})^+
    \end{align*}
\end{proof}

An immediate consequence of \cref{lem:ma_ca} is that the expected score under MA is a linear transformation of the expected score under CA.
This further implies that the partition rounding reductions of MA and CA has the same sensitivity under the binary-signal setting.

\begin{prop}
    When signals are binary, $\delta^{MA}(e) = \delta^{CA}(e)$ under the partition rounding reduction.
\end{prop}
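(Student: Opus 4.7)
The plan is to simply plug both mechanisms into the sensitivity formula for partition rounding reductions (\cref{eq:delta_partition}) and show that the resulting expressions coincide. First I would observe that both CA and MA pair one bonus question with one penalty question per individual score, so both have $K = n/2$ partitions; moreover CA's individual score lives in $[-1,1]$ while MA's lives in $[0,1]$, giving range lengths of $2$ and $1$ respectively.

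Next I would substitute the quantities already computed in the excerpt. For CA we have $\nabla m_i^{CA}(e) = \sum_{\sigma_1,\sigma_2}(\Delta_{\sigma_1,\sigma_2})^+$ and $m_i^{CA}(e) = e\cdot\sum_{\sigma_1,\sigma_2}(\Delta_{\sigma_1,\sigma_2})^+$. For MA we have $\nabla m_i^{MA}(e) = \tfrac{1}{2}\sum_{\sigma_1,\sigma_2,\sigma_3} M^b_{\sigma_2}M^b_{\sigma_3}(\Gamma_{\sigma_1,\sigma_2,\sigma_3})^+$ and $m_i^{MA}(e) = \tfrac{1}{2} + e\cdot \nabla m_i^{MA}(e)$. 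The key step is to invoke \cref{lem:ma_ca} to rewrite the MA sum in terms of the CA sum: letting $A \coloneqq \sum_{\sigma_1,\sigma_2}(\Delta_{\sigma_1,\sigma_2})^+$, we get $\nabla m_i^{MA}(e) = A/2$ and $m_i^{MA}(e) = (1+eA)/2$.

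Plugging into \cref{eq:delta_partition} for CA yields
\[
\hat{\delta}^{CA}(e) = \frac{A\sqrt{n/2}}{\sqrt{(eA+1)(1-eA)}},
\]
while for MA we get
\[
\hat{\delta}^{MA}(e) = \frac{(A/2)\sqrt{n/2}}{\sqrt{\bigl(\tfrac{1+eA}{2}\bigr)\bigl(\tfrac{1-eA}{2}\bigr)}} = \frac{(A/2)\sqrt{n/2}}{(1/2)\sqrt{(1+eA)(1-eA)}} = \frac{A\sqrt{n/2}}{\sqrt{(1+eA)(1-eA)}}.
\]
The two expressions coincide, establishing the claim.

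There is really no hard step here once \cref{lem:ma_ca} is in hand; the content of the proposition is the algebraic observation that MA's halving of $\nabla m$ (from the $\tfrac{1}{2}$ prefactor in its definition) is exactly canceled by its smaller score range, which shrinks the denominator $\sqrt{(m-S_{\inf})(S_{\sup}-m)}$ by the same factor of $2$. The only thing to keep track of carefully is that the MA score is centered around $1/2$ rather than $0$, so the affine shift $m^{MA}_i = (1+eA)/2$ makes the product $(m^{MA}_i - 0)(1 - m^{MA}_i)$ factor as $\tfrac{1}{4}(1+eA)(1-eA)$, matching CA's $(m^{CA}_i+1)(1-m^{CA}_i) = (1+eA)(1-eA)$ up to the compensating factor of $1/4$.
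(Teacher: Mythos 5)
Your proof is correct and follows essentially the same route as the paper's: invoke \cref{lem:ma_ca} to express $m_i^{MA}$ and $\nabla m_i^{MA}$ as the affine/linear transforms $\tfrac{1}{2}+\tfrac{1}{2}m_i^{CA}$ and $\tfrac{1}{2}\nabla m_i^{CA}$, then substitute into \cref{eq:delta_partition} and observe the factors of $\tfrac12$ cancel because MA's score range is half of CA's. Your explicit remark on \emph{why} the cancellation happens (the halved $\nabla m$ is compensated by the halved range shrinking $\sqrt{(m-S_{\inf})(S_{\sup}-m)}$ by the same factor) is a clean gloss on what the paper's calculation leaves implicit.
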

\begin{proof}
    By \cref{lem:ma_ca}, we know that \[m_i^{MA}(e) = \frac{1}{2} + \frac{e}{2}\sum_{\sigma_1,\sigma_2,\sigma_3} M^b_{\sigma_2}M^b_{\sigma_3}\;(\Gamma_{\sigma_1,\sigma_2,\sigma_3})^+ =  \frac{1}{2} + \frac{e}{2} \sum_{\sigma_1,\sigma_2} (\Delta_{\sigma_1,\sigma_2})^+ = \frac{1}{2} + \frac{1}{2} m_i^{CA}(e),\]
    \[\nabla m_i^{MA}(e) = \frac{1}{2} \nabla m_i^{CA}(e).\]
    By \cref{eq:delta_partition},
    \begin{align*}
        \delta^{MA}(e) &= \frac{\nabla m_i^{MA}(e)\cdot \sqrt{n/2}}{\sqrt{\left(m_i^{MA}(e) -S^{MA}_{\inf} \right) \left(S^{MA}_{\sup} - m_i^{MA}(e) \right)}}\\
        &= \frac{\frac{1}{2}\nabla m_i^{CA}(e)\cdot \sqrt{n/2}}{\sqrt{\left(\frac{1}{2}m_i^{MA}(e) + \frac{1}{2}\right) \left(\frac{1}{2} - \frac{1}{2} m_i^{MA}(e) \right)}}\\
        &= \frac{\nabla m_i^{CA}(e)\cdot \sqrt{n/2}}{\sqrt{\left(m_i^{MA}(e) + 1\right) \left(1 - m_i^{CA}(e) \right)}}\\
        &= \delta^{CA}(e).
    \end{align*}
\end{proof}

\section{Experiments (Continue)}
\label{app:experiment}

All codes for our experiments are provided in \url{https://github.com/DavidXu999/Stochastically-Dominant-Peer-Prediction}.

\subsection{Information Structure}\label{app:structure}

For Section~\ref{subsec:empirical_comparison}, we consider two real-world crowdsourcing datasets, each having a signal space of size $|\Sigma_1|=2$ and $|\Sigma_2|=4$.
The information structure between a pair of agents can be characterized by a Dawid-Skene model \cite{dawid1979maximum}.
In particular, each task has an underlying ground truth $\omega$ that is i.i.d.~sampled from a prior distribution $W$. 
Conditioning on $\omega$, each agent receives a signal $\sigma$ according to conditioned distribution $\Gamma$, where $\Gamma_{i,j} = \Pr[\sigma=j\mid \omega=i]$. 

In the first dataset \citep{baba2018wisdom}, $18$ agents each labels whether a compound is appropriate or inappropriate to be synthesized.
The prior of the ground truth and the conditioned signal distribution are
\[W_1 = \bigl[ \begin{matrix} 0.613 & 0.387\end{matrix}\bigr],\qquad \Gamma_1 = \begin{bmatrix} 
0.905 & 0.095 \\
0.283 & 0.717
\end{bmatrix}.\]
In the second dataset \cite{soton376543}, $110$ agents provide the annotations of the sentiment of $300$ tweets.
There are four ground truth labels and four possible signals where the information structure is
\begin{equation*}
w_2 = \bigl[ \begin{matrix}0.196 & 0.241 & 0.247 & 0.316\end{matrix}\bigr],\qquad \Gamma_2 = \left[ \begin{matrix}0.770 & 0.122 & 0.084 & 0.024\\ 0.091 & 0.735 & 0.130 & 0.044\\0.033 & 0.062 & 0.866 & 0.039 \\ 0.068 & 0.164 & 0.099 & 0.669 \end{matrix}\right].
\end{equation*}

\subsection{The Sensitivity of Original Implementations}
\label{app:original}

\textit{How much does the partition rounding reduction harm the sensitivity?} We compare the partition rounding reduction implementations of CA, MA, and PTS with their original implementations in \cref{fig:comparision_sensitivity_partition_rounding}.
To be clear, other than the implementations considered in \cref{sec:experiment}, we consider the following implementation of the mechanisms.
\begin{itemize}
    \item \textbf{PTS} corresponds to the peer truth serum mechanism introduced in \cref{subsec:pts}, which is truthful but not SD-truthful.
    \item \textbf{CA} corresponds to the original implementation of the Correlated Agreement mechanism described in \cref{subsec:ca}, which is truthful but not SD-truthful.
    \item \textbf{CA-partition} represents the implementation where we get $K=n/2$ partitions of questions and score agents the average of the individual CA score for each partition (without normalization). By \cref{prop:ca_sdt_single}, CA-partition is SD-truthful when signals are binary.
    \item \textbf{MA} corresponds to the original implementation of the Matching Agreement mechanism described in \cref{app:matching}, which is truthful but not SD-truthful.
\end{itemize}

\begin{figure}[htbp]
    \centering
    \begin{subfigure}{0.4\textwidth}
        \centering
        \includegraphics[width=\linewidth]{figures/sensitivity_vs_n_task_binary_sdt.pdf}
        \caption{Mechanisms shown in \cref{sec:experiment} under $J_1$}
    \end{subfigure}
    \begin{subfigure}{0.4\textwidth}
        \centering
        \includegraphics[width=\linewidth]{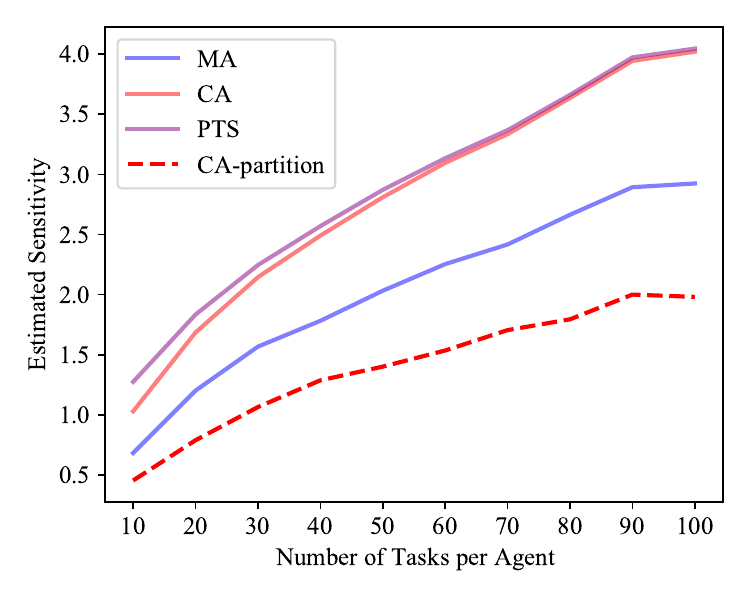}
        \caption{Other Mechanisms under $J_1$}
    \end{subfigure}
        \begin{subfigure}{0.4\textwidth}
        \centering
        \includegraphics[width=\linewidth]{figures/sensitivity_1_sdt_non_binary.pdf}
        \caption{Mechanisms shown in \cref{sec:experiment} under $J_2$}        
    \end{subfigure}
    \begin{subfigure}{0.4\textwidth}
        \centering
        \includegraphics[width=\linewidth]{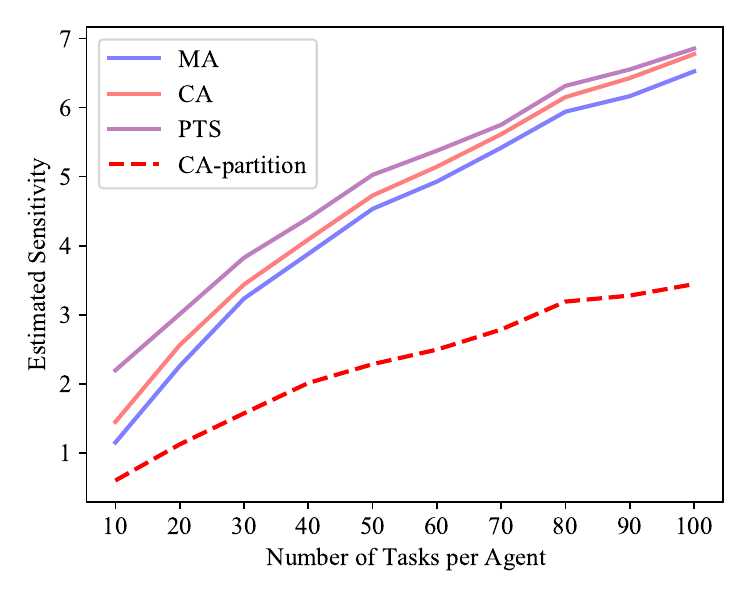}
        \caption{Other Mechanisms under $J_2$}
    \end{subfigure}
    \caption{
    A Comparison of the Sensitivity of Different Mechanisms.
    }
    \label{fig:comparision_sensitivity_partition_rounding}
\end{figure}

We observe that, relative to the original implementations (solid lines), the partition-rounding reductions (dashed lines) consistently exhibit lower sensitivity. This observation further demonstrates the superior sensitivity of EA mechanisms, as it has the property of SD-truthfulness in the binary setting with no need for partition rounding reduction.

\section{Additional Proofs}

\subsection{The CA Mechanism}
\label{app:ca}

We first show that a single draw of the CA score is SD-truthful when $|\Sigma|=2$.

\begin{prop}\label{prop:ca_sdt_single}
    When $|\Sigma|=2$, instead of taking the average, the mechanism that scores agents using an individual CA score $S^{CA}_i$ is SD-truthful.\footnote{When $|\Sigma|\ge 3$, counter-examples exist showing that scoring agents with $S^{CA}_i$ is no longer SD-truthful for general information structure.}
\end{prop}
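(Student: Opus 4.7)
The plan is to exploit two structural facts about the binary setting. First, the individual CA score takes only three values, $S^{CA}_i \in \{-1,0,1\}$, so SD-truthfulness collapses to just two pointwise inequalities, namely $\Pr(S^{CA}_i(\tau) = 1) \ge \Pr(S^{CA}_i(\theta) = 1)$ and $\Pr(S^{CA}_i(\tau) = -1) \le \Pr(S^{CA}_i(\theta) = -1)$. Second, with $|\Sigma|=2$ the delta matrix $\Delta$ has rows and columns that sum to zero, so it is pinned down by a single scalar $d := \Delta_{0,0}$, and a short calculation gives $d = \det P$ where $P_{\sigma,\sigma'} = \Pr(X_a = \sigma, X_b = \sigma')$. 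This in turn forces $T_\Delta$ to be either the agreement indicator (when $d>0$), the disagreement indicator (when $d<0$), or identically zero (when $d=0$, in which case $S^{CA}_i \equiv 0$ and the claim is trivial).

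For the two nontrivial sign cases I would parameterize any (possibly mixed) strategy as $(\theta_0,\theta_1) \in [0,1]^2$ with $\theta_\sigma := \Pr(\hat{X}_a = 0 \mid X_a = \sigma)$, so that truth-telling is the corner $(1,0)$ that uniquely maximizes $\theta_0 - \theta_1$ over the unit square. Using that Bob is truthful and his reports on the bonus question $j$ and penalty question $k$ are independent copies of $X_b$, I would expand $\Pr(S^{CA}_i = 1)$ and $\Pr(S^{CA}_i = -1)$ as sums over the two signals, substitute $\hat{X}_a = \theta(X_a)$, and collect the coefficients of $\theta_0$ and $\theta_1$. The key computation is that the resulting expressions are affine in the single quantity $\theta_0 - \theta_1$, with slopes of absolute value $|d|$ and opposite signs, the sign pattern being arranged so that truth-telling simultaneously maximizes the top-score probability and minimizes the bottom-score probability. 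In the $d<0$ case the indicators $T_\Delta$ flip, which swaps which of $\Pr(S^{CA}_i=1)$ and $\Pr(S^{CA}_i=-1)$ carries each slope, but the slope magnitudes remain $|d|$ and the monotonicity in $\theta_0-\theta_1$ goes the correct way, so the conclusion is unchanged.

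The main obstacle I expect is purely bookkeeping: tracking the post-strategy joint $\Pr(\hat{X}_a, X_b)$ and marginal $\Pr(\hat{X}_a)$ for a generic $\theta$, and verifying that the cross-terms cancel cleanly down to a common factor of $d$ via the identity $\Pr(X_a=\sigma, X_b=\sigma')\Pr(X_a=\sigma'', X_b=\sigma''') - \Pr(X_a=\sigma, X_b=\sigma''')\Pr(X_a=\sigma'', X_b=\sigma') = \pm \det P$ for the appropriate indices. The scalar parameterization of $\Delta$ is the key reason this works, and as the footnote in the statement already signals, it is precisely this one-dimensional structure that disappears once $|\Sigma| \ge 3$, where $\Delta$ has many independent entries and truth-telling no longer enjoys a one-dimensional extremal characterization over the strategy simplex.
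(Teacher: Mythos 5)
Your proposal is correct, and it takes a genuinely different route from the paper. The paper first shows $T_\Delta$ is diagonal iff signals are self-predicting, then enumerates the three extreme untruthful pure strategies (always $0$, always $1$, always flip), checks each is dominated by truth-telling, and implicitly relies on the fact that any strategy is a convex combination of these corners. Your approach replaces that enumeration with a single algebraic computation: parameterize strategies by $(\theta_0,\theta_1)\in[0,1]^2$ and observe that
\[
\Pr\bigl(S^{CA}_i=1\bigr)=(\theta_0-\theta_1)\det J + M^b_0 M^b_1,\qquad
\Pr\bigl(S^{CA}_i=-1\bigr)=-(\theta_0-\theta_1)\det J + M^b_0 M^b_1
\]
when $T_\Delta$ is the identity (with the roles swapped when $\det J<0$), using the identity $\Delta_{0,0}=\det J$ and the rank-one structure of $\Delta$ in the binary case. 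Truth-telling is the unique maximizer of $\theta_0-\theta_1$ over the strategy square, so both required inequalities follow immediately for every strategy, pure or mixed, in one stroke. This is cleaner in two respects: it dispenses with the corner-by-corner case analysis and the implicit convexity appeal, and it makes visible that $\Pr(S^{CA}_i=1)+\Pr(S^{CA}_i=-1)=2M^b_0M^b_1$ is strategy-invariant, so the only free quantity is the spread $\theta_0-\theta_1$ scaled by $\det J$. The paper's route, by contrast, is more elementary and hews closer to the self-predicting/self-dominating vocabulary used throughout the paper, which is why they favor it. Both are valid; yours is the more structural argument and also pinpoints exactly why the claim fails for $|\Sigma|\ge 3$, where $\Delta$ is no longer rank one.
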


\begin{proof}[Proof of \cref{prop:ca_sdt_single}]
    We first show that the agreement function $T_\Delta$ is an identity matrix if and only if signals are self-predicting.
    For simplicity, let $J_{\sigma,\sigma'}=\Pr(X_a=\sigma, X_b=\sigma')$ be the joint distribution, and let $M^a_\sigma = \Pr(X_a=\sigma)$ and $M^b_\sigma = \Pr(X_b=\sigma)$ be the marginal distribution of Alice and Bob respectively.
    The delta matrix can be written as 
    \begin{align*}
        \Delta_{\sigma,\sigma} &= J_{\sigma,\sigma} - M^a_\sigma\; M^b_\sigma\\
        &= J_{\sigma,\sigma} - M^a_\sigma\;(J_{\sigma,\sigma}+J_{\sigma',\sigma})\\
        &= J_{\sigma,\sigma}\;M^a_{\sigma'} - J_{\sigma',\sigma}\;M^a_\sigma\\
        &= M^a_\sigma\; M^a_{\sigma'}\;\left(\Pr(X_b=\sigma\mid X_a=\sigma) - \Pr(X_b=\sigma\mid X_a=\sigma')\right).
    \end{align*}
    \begin{align*}
        \Delta_{\sigma,\sigma'} &= J_{\sigma,\sigma'} - M^a_\sigma\; M^b_{\sigma'}\\
        &= J_{\sigma,\sigma'} - M^a_\sigma\;(J_{\sigma,\sigma'}+J_{\sigma',\sigma'})\\
        &= J_{\sigma,\sigma'}\;M^a_{\sigma'} - J_{\sigma',\sigma'}\;M^a_\sigma\\
        &= M^a_\sigma\; M^a_{\sigma'}\;\left(\Pr(X_b=\sigma'\mid X_a=\sigma) - \Pr(X_b=\sigma'\mid X_a=\sigma')\right).
    \end{align*}

    Next, we reason about the distribution of scores under different strategies.
    Note that a single draw of the CA score can be $-1$, $0$, or $1$.
    Therefore, to prove SD-truthfulness for a single draw of the CA score, it is sufficient to show that $\Pr\left(S^{CA}_i(\tau)=1\right)\ge \Pr\left(S^{CA}_i(\theta)=1\right)$ and $\Pr\left(S^{CA}_i(\tau)=-1\right)\le \Pr\left(S^{CA}_i(\theta)=-1\right)$ for any strategy $\theta$.
    When signals are binary, there are only three types of untruthful strategy: 1) always reporting 0, 2) always reporting 1, and 3) always flipping 0 to 1 and 1 to 0.
    Any other strategy is a mixture of truth-telling and these strategies.
    Now, we show that all of these strategies are stochastically dominated by truth-telling by discussing two cases.

    When signals are self-predicting, i.e.~$\Pr(X_b=\sigma\mid X_a=\sigma) - \Pr(X_b=\sigma\mid X_a=\sigma')>0$, the distribution of $S^{CA}_i$ is 
    \begin{align*}
         \Pr\left(S^{CA}_i(\tau)=1\right) &= J_{0,0}\;M^b_1 + J_{1,1}\;M^b_0,\\
         \Pr\left(S^{CA}_i(\tau)=-1\right) &= J_{0,1}\;M^b_0 + J_{1,0}\;M^b_1.\\
     \end{align*}

    Because Bob's signal is self-predicting, $\Pr\left(S^{CA}_i(\tau)=1\right) > \Pr\left(S^{CA}_i(\tau)=-1\right)$.
    Let $\theta_f$ be the strategy where Alice always flips the signals.
    \begin{align*}
         \Pr\left(S^{CA}_i(\theta_f)=1\right) &= J_{0,1}\;M^b_0 + J_{1,0}\;M^b_1 = \Pr\left(S^{CA}_i(\tau)=-1\right),\\
         \Pr\left(S^{CA}_i(\theta_f)=-1\right) &= J_{0,0}\;M^b_1 + J_{1,1}\;M^b_0 = \Pr\left(S^{CA}_i(\tau)=1\right).\\
     \end{align*}
     This means that $S^{CA}_i(\theta_f)$ is dominated by $S^{CA}_i(\tau)$.

     Let $\mu$ be the strategy where Alice always reports the same signal.
    \begin{align*}
         \Pr\left(S^{CA}_i(\mu)=1\right) = \Pr\left(S^{CA}_i(\mu)=-1\right)&= M^b_0\;M^b_1.
     \end{align*}
     Therefore, 
     \begin{align*}
         \Pr\left(S^{CA}_i(\tau)=1\right) - \Pr\left(S^{CA}_i(\mu)=1\right)&= J_{0,0}\;M^b_1 + J_{1,1}\;M^b_0 -M^b_0\;M^b_1\\
         &=J_{0,0}\;M^b_1 + J_{1,1}\;M^b_0 -M^b_0\;(J_{0,1}+J_{1,1})\\
         &= J_{0,0}\;M^b_1 -J_{0,1}\; M^b_0\\
         &= M^b_0\;M^b_1\;\left(\Pr(X_a=0\mid X_b=0) - \Pr(X_b=0\mid X_a=1\right)\\
         &> 0
     \end{align*}
     Similarly, we can observe that $\Pr\left(S^{CA}_i(\tau)=-1\right) < \Pr\left(S^{CA}_i(\mu)=-1\right)$.

     The analysis for the case when signals are not self-predicting, i.e.~$T_\Delta(\sigma,\sigma')=1$ if and only if $\sigma\neq \sigma'$, is analogous.
\end{proof}

Next, we show that when $|\Sigma|=3$, counterexamples exist showing that truth-telling does not FOSD every untruthful strategy.

\begin{example}
    Consider the following joint distribution:
    \[
    J = \begin{pmatrix}
    0.02 & 0.01 & 0.02 \\
    0.2  & 0.4  & 0.25 \\
    0.00 & 0.05 & 0.05
    \end{pmatrix}.
    \]
    The marginal distribution of Alice and Bob are 
    \[M^a = \begin{pmatrix}0.05, & 0.85, & 0.1\end{pmatrix}\;\; \text{and }\; M^b=\begin{pmatrix}0.22, & 0.46, & 0.32\end{pmatrix}.\]
    The product of marginal distribution and the corresponding agreement function are
    \[
    \Delta =
    \begin{pmatrix}
    0.009  & -0.013 & 0.004 \\[1ex]
    0.013  & 0.009  & -0.022 \\[1ex]
    -0.022 & 0.004  & 0.018
    \end{pmatrix},\quad 
    T_\Delta =
    \begin{pmatrix}
    1  & 0 & 1 \\[1ex]
    1  & 1  & 0 \\[1ex]
    0 & 1  & 1
    \end{pmatrix}.
    \]
    
    When Alice reports truthfully, the probability of obtaining a score of 1, i.e.~obtaining a score of 1 on the bonus question and obtaining a score of 0 on the penalty question, is given by
    \[\Pr(S^{CA}_i(\tau)=1) = \underbrace{(J_{0,0}+J_{0,2})\cdot M^b_1}_{\text{Alice reports 0}} + \underbrace{(J_{1,0}+J_{1,1})\cdot M^b_2}_{\text{Alice reports 1}}+\underbrace{(J_{2,1}+J_{2,2})\cdot M^b_0}_{\text{Alice reports 0}} = 0.2324.\]
    However, when Alice plays a strategy $\mu$ that always reports 0, the probability of obtaining a score of 1 is given by
    \[\Pr(S^{CA}_i(\mu)=1) = (M^b_{0}+M^b_{2})\cdot M^b_1=0.2474>\Pr(S^{CA}_i(\tau)=1).\]
    Therefore, $\mu$ is not first-order stochastically dominated by $\tau$.
\end{example}

\subsection{The Direct Rounding Reduction}
\label{app:direct}

\begin{proof}[Proof of \cref{prop:sensitivity_direct}]
    By definition, the sensitivity of a mechanism with score $S(e)$ at effort level $e$ is 
    \begin{equation*}
        \delta^{\mathcal{M}}(e) = \frac{\nabla m^{\mathcal{M}}(e)}{\text{std}^{\mathcal{M}}(e)}.
    \end{equation*}

    The direct rounding of $S^{\mathcal{M}}(e)$, denoted as $\tilde{S}^{\mathcal{M}}(e)$, is a Bernoulli variable with success probability $\mathbb{E}[\lambda^{\mathcal{M}}]=\frac{m^{\mathcal{M}}(e) - S^{\mathcal{M}}_{\inf}}{S^{\mathcal{M}}_{\sup} - S^{\mathcal{M}}_{\inf}}$.
    Therefore,
    \begin{align}
        & \nabla \mathbb{E}\left[\tilde{S}^{\mathcal{M}}(e)\right] = \frac{1}{S^{\mathcal{M}}_{\sup} - S^{\mathcal{M}}_{\inf}}\cdot \nabla m^{\mathcal{M}}(e) \notag\\
        & \text{std}\left(\tilde{S}^{\mathcal{M}}(e)\right) = \sqrt{\mathbb{E}[\lambda^{\mathcal{M}}]\left(1-\mathbb{E}[\lambda^{\mathcal{M}}]\right)} = \frac{1}{S^{\mathcal{M}}_{\sup} - S^{\mathcal{M}}_{\inf}}\cdot \sqrt{(m^{\mathcal{M}}(e)-S^{\mathcal{M}}_{\inf})\left(S^{\mathcal{M}}_{\sup} - m^{\mathcal{M}}(e)\right)}\notag\\
        \Rightarrow\quad& \tilde{\delta}^{\mathcal{M}}(e) = \frac{\nabla m^{\mathcal{M}}(e) }{\sqrt{(m^{\mathcal{M}}(e)-S^{\mathcal{M}}_{\inf})\left(S^{\mathcal{M}}_{\sup} - m(e)\right)}}.
    \end{align}\label{eq:delta_direct}
\end{proof}

\subsection{The Partition Rounding Reduction}
\label{app:Kto1}


\begin{proof}[Proof of \cref{lem:Kto1}]
    Let $S_i(\tau)$ and $S_i(\theta)$ be the random variable of each individual score under truth-telling and strategy $\theta$, respectively.
    $S_i(\tau)$ (and $S_i(\theta)$) is i.i.d.~because questions are assumed to be i.i.d.~and strategies are task-independent.
    We want to show 
    $$S_i(\tau) \quad \succeq_{\text{FOSD}}\quad S_i(\theta) \qquad \Leftrightarrow \qquad S(\tau) \quad \succeq_{\text{FOSD}} \quad S(\theta).$$

    \smallskip
    \textbf{Forward Direction.\quad}
    Suppose $S_i(\tau) \succeq_{\text{FOSD}} S_i(\theta)$ for any $\theta$. 
    By definition, this means for each individual score, we can find a coupling such that whenever $S_i(\theta) = s_i$, $S_i(\tau) = s'_i\ge s_i$.
    Applying this coupling for every individual score and taking the average gives us a coupling for $S(\tau)$ and $S(\theta)$:
    whenever $S(\theta) = \sum_{i\in [K]} s_i$, $S(\tau) = \sum_{i\in [K]} s'_i \ge \sum_{i\in [K]} s_i$ in the coupling.
    Therefore, in this coupling, $\Pr(S(\theta)\le t)\ge \Pr(S(\tau)\le t)$ for any $t\in [K\cdot S_{\inf}, K\cdot S_{\sup}]$ where $S_{\inf}$ and $S_{\sup}$ are the infimun and supremum of each $S_i$, respectively.
    This completes the proof for the sufficiency.

    \smallskip
    \textbf{Reverse Direction.\quad}
    We prove the contrapositive of the statement: if $S_i(\tau)$ does not first-order stochastic dominates $S_i(\theta)$, then $S(\tau)$ does not first-order stochastic dominates $S(\theta)$.
    By the failure of FOSD for the individuals, there exists some threshold $t_1$ and (by the i.i.d. assumption) every $i$ such that $F_\tau(t_1) > F_\theta(t_1)$, where $F_\tau$ and $F_\theta$ denote that c.d.f.~for $S_i(\tau)$ and $S_i(\theta)$ respectively.
    Let $F^{k}_\tau$ and $F^{k}_\theta$ be the c.d.f.~of the sum of $k$ i.i.d.~individual scores, i.e.~$\sum_{i\in [k]}S_i(\tau)$ and $\sum_{i\in [k]}S_i(\theta)$ respectively.
    We want to find a $t_K$ such that $F^{K}_\tau(t_K) > F^{K}_\theta(t_K)$.
    Then, because $S$ is a linear scaling (by $1/K$) of the sum, the inequality transfers directly to the averages.

    We prove this via an inductive argument:
    \begin{itemize}
        \item Base case ($K=1$): by assumption, there exists a $t_1$ such that $F_\tau(t_1) > F_\theta(t_1)$.
        \item Inductive step: suppose that for $K-1$ there exists some threshold $t_{K-1}$ such that 
        $$F^{K-1}_\tau(t_{K-1}) > F^{K-1}_\theta(t_{K-1}).$$
        Now, consider the $K$-fold convolution. 
        Note that the convolution of discrete distributions is a weighted sum of the individual c.d.f..
        $$F^{K}_\tau(t_K) = \sum_{s} \Pr\left(\sum_{i\in [K-1]}S_i(\tau)=s\right)\;F_\tau(t_K-s),$$
        and similarly for $F^{K}_\theta(t_K)$.
        By the base case and the induction assumption, we can set $t_K=t_{K-1}+t_1$ so that $F^K_\tau(t_K)>F^{K}_\theta(t_K)$.
        This means that there exists a threshold $t=t_K/K$ so that 
        \[\Pr\bigl(S(\tau) \le t\bigr) > \Pr\bigl(S(\theta) \le t\bigr).\]
    \end{itemize}
    This completes the proof of the necessity.
\end{proof}

\subsection{EA in the Binary-signal Setting}
\label{app:binary}
\begin{proof}[Proof of \cref{thm:ea_binary}]
    Suppose Alice observes $m_0$ zeros and $m_1$ ones on $n$ questions while the enforced empirical distribution is $\Phi=(n_0, n_1)$.
    Alice's strategy and $\Phi$ thus determines the number of i.i.d.~samples from each of the four binomials.

    To prove SDT, we have to show that truth-telling results in a score distribution that first-order stochastic dominates the score distribution under any other strategy.
    Suppose W.L.O.G.~that $m_0 > n_0$ and the mechanism will randomly flip Alice's reports on $m_0-n_0$ questions from 0 to 1.
    If Alice reports truthfully, there are $n_0$ questions where Alice reports $0$ and will be scored 1 if Bob also reports $0$; there are $m_1$ questions where Alice reports $1$ and will be scored 1 if Bob also reports $1$; and there are $m_0 - n_0$ questions where Alice observes $0$ but her reports are flipped to 1 which means that she will be scored 1 if Bob reports $1$ on those questions.
    Therefore, the final score of Alice can be expressed as 
    \begin{align}\
        n\cdot S(\tau) \sim \textcolor{red}{\mathrm{Bin}(n_0, p_{00})} + \textcolor{blue}{\mathrm{Bin}(m_1, p_{11})} + \textcolor{ForestGreen}{\mathrm{Bin}(m_0-n_0, p_{01})}.
    \end{align}

    To show that $S(\tau)$ dominates $S(\theta)$ for any strategy $\theta$, it is sufficient to focus on deterministic strategies.
    Suppose $x_{i,j}$ is the number of questions that Alice observes $i$ but reports $j$ for $i,j\in \{0,1\}$.
    First note that $x_{0,0}+x_{0,1} = m_0$ and $x_{1,0}+x_{1,1} = m_1$.
    Then, we argue that it is sufficient to focus on strategies that satisfy $x_{0,0}+x_{1,0} = n_0$ and $x_{0,1}+x_{1,1} = n_1$.
    Otherwise, the mechanism will enforce $\Phi$ on behalf of Alice which is identical for Alice to flip her signals by herself.
    Fixing any $\Phi$ and $\Phi_{X_a}$, the above four constraints suggest that there is only one free variable in $x$, meaning that there is only one possible way to cheat\textemdash altering $x_{0,0}\in \{0, \ldots, n_0\}$.

    Now, we show that reporting $x_{0,0} = n_0$ is the best strategy\textemdash reporting any $x_{0,0}$ results in a score distribution dominated by reporting $x_{0,0} = n_0$.
    Also, note that truth-telling is one of the strategies that has $x_{0,0} = n_0$.
    Let $x_{0,0} = n_0 - k$ where $k \ge 0$.
    Then, the final score of Alice and be expressed as 
    \begin{align}
        n\cdot S(\theta) \textcolor{red}{\sim \mathrm{Bin}(n_0-k, p_{00})} + \textcolor{blue}{\mathrm{Bin}(m_1 - k, p_{11})} + \textcolor{ForestGreen}{\mathrm{Bin}(m_0-n_0 + k, p_{01})} + \textcolor{orange}{\mathrm{Bin}(k, p_{10})}.
    \end{align}

    Compared with the truth-telling score distribution in \cref{eq:_dist_tau}, any untruthful reporting can be viewed as subtracting $k$ samples from $\textcolor{red}{\mathrm{Bin}(n_0, p_{00})}$ and draw $k$ more samples from $\textcolor{orange}{\mathrm{Bin}(k, p_{10})}$; subtracting $k$ samples from $\textcolor{blue}{\mathrm{Bin}(m_1, p_{11})}$ and draw $k$ more samples from $\textcolor{ForestGreen}{\mathrm{Bin}(m_0-n_0 + k, p_{01})}$.
    
    When signals are self-predicting, $p_{00} > p_{10}$ and $p_{11} > p_{01}$.
    We can construct a coupling to show that $S(\theta)$ is dominated by $S(\tau)$.
    In particular, for the $k$ questions whose scores are sampled from $\textcolor{orange}{\mathrm{Bin}(k, p_{10})}$, we couple them with $k$ out of the $n_0$ samples from $S(\theta) \textcolor{red}{\sim \mathrm{Bin}(n_0-k, p_{00})}$; for the $n_0-k$ questions whose scores are sampled from $\textcolor{red}{\sim \mathrm{Bin}(n_0-k, p_{00})}$, we couple them with the remaining $n_0-k$ samples from $\textcolor{red}{\sim \mathrm{Bin}(n_0-k, p_{00})}$; we do the same for the other two Binomial distributions.
    Under this coupling, it is clear that making $x_{0,0}$ as large as possible will have the dominating score distribution, which happens while truth-telling.
\end{proof}

\subsection{EA in the Three-signal Setting}
\label{app:three}

\begin{proof}[Proof of \cref{prop:impossible}]
 \quad

    \textbf{Sufficiency.}\;
    We prove the sufficiency by discussing two cases.
    First, suppose W.L.O.G.~that $m_0\ge n_0$, $m_1\le n_1$, and $m_2\le n_2$.
    Under the minimal enforcement rule, the manipulation matrix corresponds to truth-telling is 
    \[
    \begin{array}{c}
        \vartheta^\tau(\rho^*)= \left[ 
            \begin{array}{ccc}
                n_0 & n_1-m_1 & n_2-m_2 \\
                0   & m_1     & 0      \\
                0   & 0       & m_2
            \end{array}
          \right].
    \end{array}
    \]
    When we score an arbitrary report vector $\hat{X}_a$, it is sufficient to consider its manipulation matrix $\vartheta$.
    Therefore, we want to show that $\vartheta^\tau$ will result in a score distribution that first-order stochastically dominates the score distribution of any other feasible $\vartheta$.
    Note that there are $9$ variables in a $3\times3$ matrix and $5$ constraints: the sum of variables in row $i$ is $m_i$ and the sum of variables in column $j$ is $n_j$ while only $5$ of these $6$ constraints are independent.
    This means that any manipulation matrix be characterized by $4$ independent variables denoted as $t_1,\ldots t_4\ge 0$.
    \[
    \begin{array}{c}
        \vartheta= \left[ 
            \begin{array}{ccc}
                n_0-t_1-t_2 & n_1-m_1+t_1+t_4-t_3 & n_2-m_2+t_2+t_3-t_4 \\
                t_1   & m_1-t_1-t_4     & t_4      \\
                t_2   & t_3       & m_2-t_2-t_3
            \end{array}
          \right].
    \end{array}
    \]
    The variables $t_1,\ldots t_4$ must satisfy the constraints ensuring that every entry in $\vartheta$ remains non-negative.

    The final score of truth-telling is given by the average of $n$ Bernoulli variables where $\vartheta^\tau(\rho^*)_{i,j}$ of these variables are sampled from $\mathrm{Bernoulli}(p_{i,j})$.
    Similarly, the final score corresponding to the manipulation matrix $\vartheta$ is the average of $n$ Bernoulli variables where $\vartheta^\theta_{i,j}$ of them are sampled from $\mathrm{Bernoulli}(p_{i,j})$.
    
   Our goal is to construct a coupling of these Bernoulli variables such that, for any given $\Phi_{X_a}$, $\Phi$, and $t_1,\ldots, t_4$, each variable under truth-telling with manipulation matrix $\vartheta^\tau(\rho^*)$ has a weakly higher probability of success than its coupled counterpart under $\vartheta$.
    We denote this coupling by $Z^\tau_{i,j} \xrightarrow{k} Z^\theta_{i',j'}$ to indicate that we pair $k$ Bernoulli variables associated with $\vartheta^\tau(\rho^*)$\textemdash each drawn from \(\mathrm{Bernoulli}(p_{ij})\)\textemdash with \(k\) Bernoulli variables associated with $\vartheta$\textemdash each drawn from \(\mathrm{Bernoulli}(p_{i'j'})\). 
    Our coupling is: 
    \[
    \begin{array}{c}\left[ 
            \begin{array}{ccc}
                Z^\tau_{0,0} \xrightarrow{n_0-t_1-t_2} Z^\theta_{0,0} & Z^\tau_{1,1} \xrightarrow{t_1+t_4} Z^\theta_{0,1} & Z^\tau_{2,2} \xrightarrow{t_2+t_3} Z^\theta_{0,2}\\
                Z^\tau_{0,0} \xrightarrow{t_1} Z^\theta_{1,0}   & Z^\tau_{1,1} \xrightarrow{m_1-t_1-t_4} Z^\theta_{1,1}    & Z^\tau_{0,2} \xrightarrow{t_4} Z^\theta_{1,2}      \\
                Z^\tau_{0,0} \xrightarrow{t_2} Z^\theta_{2,0}   & Z^\tau_{0,1} \xrightarrow{t_3} Z^\theta_{2,1}       & Z^\tau_{2,2} \xrightarrow{m_2-t_2-t_3} Z^\theta_{2,2}
            \end{array}
          \right].
    \end{array}
    \]

Under this coupling, every diagonal variable $Z^\theta_{i,i}$ is coupled with $Z^\tau_{i,i}$ which is drawn from the same Bernoulli distribution.
Furthermore, every non diagonal variable $Z^\theta_{i,j}$ is either coupled with $Z^\tau_{i,i}$ which has a larger success probability because of self-prediction, or coupled with $Z^\tau_{k,j}$ which has the same success probability because signals are uniformly self-predicting.

The proof of the second case is analogous. 
suppose W.L.O.G.~that $m_0 > n_0$, $m_1 > n_1$, and $m_2 < n_2$.
Under the minimal enforcement rule, the manipulation matrix corresponds to truth-telling is 
\begin{equation}\label{eq:theta_tau}
    \begin{array}{c}
        \vartheta^\tau= \left[ 
            \begin{array}{ccc}
                n_0 & 0 & m_0-n_0 \\
                0   & n_1     & m_1-n_1      \\
                0   & 0       & m_2
            \end{array}
          \right].
    \end{array}
\end{equation}
Any feasible manipulation matrix can be written as 
\begin{equation}\label{eq:theta_lying}
    \begin{array}{c}
        \vartheta^\theta= \left[ 
            \begin{array}{ccc}
                n_0-t_1-t_2 & t_4 & m_0-n_0+t_1+t_2-t_4 \\
                t_1   & n_1-t_3-t_4     & m_1-n_1+t_3+t_4-t_1     \\
                t_2   & t_3       & m_2-t_2-t_3
            \end{array}
          \right].
    \end{array}
\end{equation}
We construct the following coupling.
If $t_1\ge t_4$, we apply the term in \textcolor{red}{red}; while if $t_1< t_4$, we apply the term in \textcolor{blue}{blue}.
    \[
    \begin{array}{c}\left[ 
            \begin{array}{ccc}
                Z^\tau_{0,0} \xrightarrow{n_0-t_1-t_2} Z^\theta_{0,0} & Z^\tau_{1,1} \xrightarrow{t_4} Z^\theta_{0,1} & Z^\tau_{2,2} \xrightarrow{t_2} Z^\theta_{0,2},\;  \textcolor{red}{Z^\tau_{1,2} \xrightarrow{t_1-t_4} Z^\theta_{0,2}}\\
                Z^\tau_{0,0} \xrightarrow{t_1} Z^\theta_{1,0}   & Z^\tau_{1,1} \xrightarrow{n_1-t_3-t_4} Z^\theta_{1,1}    & Z^\tau_{2,2} \xrightarrow{t_3} Z^\theta_{1,2},\;  \textcolor{blue}{Z^\tau_{0,2} \xrightarrow{t_4-t_1} Z^\theta_{1,2}}  \\
                Z^\tau_{0,0} \xrightarrow{t_2} Z^\theta_{2,0}   & Z^\tau_{1,1} \xrightarrow{t_3} Z^\theta_{2,1}       & Z^\tau_{2,2} \xrightarrow{m_2-t_2-t_3} Z^\theta_{2,2}
            \end{array}
          \right].
    \end{array}
    \]
Under the same argument, we can observe that every $Z^\theta_{i,j}$ is coupled with a $Z^\tau_{i',j'}$ with a weakly larger success probability.
This completes the proof of sufficiency.

\smallskip
\textbf{Necessity.}\;
We first show that if the mechanism's enforcement rule is not minimal, it is not SD-truthful.
For a non-minimal enforcement rule, there must exist a report vector $\hat{X}_a$ and an enforced histogram $\Phi$ such that if Alice reports $\hat{X}_a$ truthfully, there exists a diagonal value in the corresponding manipulation matrix that can be increased without breaking the constraints. 
In other words, there exists a strategy $\theta$ whose corresponding manipulation matrix is $\vartheta^\theta$ and a column $j$ such that  $\vartheta^\theta_{j,j}>\vartheta^\tau_{j,j}$.
Then, the strategy $\theta$ is not dominated by truth-telling even when signals are uniformly self-predicting.
This is because by playing $\theta$, Alice can move $\vartheta^\theta_{j,j}-\vartheta^\tau_{j,j}$ samples from \(\mathrm{Bernoulli}(p_{ij})\) or \(\mathrm{Bernoulli}(p_{kj})\) to \(\mathrm{Bernoulli}(p_{jj})\) where $i,k\neq j$.
For self-predicting signals, $p_{jj}$ is the largest column value.

Next, we show that if signals are not uniformly self-predicting, the enforced agreement mechanism is not SD-truthful even when the enforcement rule is minimal.
We can easily find a counterexample by comparing \cref{eq:theta_tau} with \cref{eq:theta_lying}.
Suppose W.L.O.G.~that $\vartheta^\tau_{0,2}< \vartheta^\tau_{1,2}$, then Alice can set $t_1=t_2=t_3=0$ and $t_4 = \max(n_1, m_0-n_0)$ in which case she can move $t_4$ samples from \(\mathrm{Bernoulli}(p_{0,2})\) to \(\mathrm{Bernoulli}(p_{1,2})\), resulting a score distribution that dominates truth-telling.
This completes the proof of necessity.
\end{proof}

\subsection{The Optimal Enforcement}
\label{app:opt_enforce}

\begin{proof}[Proof of \cref{lem:std_EA}]
    Suppose Alice observes 0 on $m_0$ out of the $n$ questions and observes 1 on the remaining $n-m_0$ questions.
    There are two cases.
    
    First, when $m_0\le n_0$, the final score of Alice is the average of the following three type of questions:
    \begin{itemize}
        \item $m_0$ questions where Alice observes 0 and receives a score of 1 if Bob also reports 0.
        \item $n-n_0$ questions where Alice observes 1 and receives a score of 1 if Bob also reports 1.
        \item $n_0-m_0$ questions where Alice observes 1 but receives a score of 1 if Bob reports 0.
    \end{itemize}
    When Alice exerts full effort ($e=1$), Bob's reports (which match his signals) follow the conditional probability distribution $p_{ij} = \Pr(X_b=j\mid X_a=i)$. 
    In contrast, when Alice exerts no effort ($e=0$), her signals are uninformative about Bob's signals, and thus her belief about Bob's reports follows the marginal distribution $M^b_i= \Pr(X_b=i)$.
    Therefore, conditioned on Alice observing 0 in $m_0$ questions, the score defined by EA follows the sum of three binomials:
    \begin{align*}
       & n\cdot S^{EA}(n_0, e=1\mid m_0\le n_0) \sim \mathrm{Bin}(m_0, p_{00}) + \mathrm{Bin}(n-n_0, p_{11}) + \mathrm{Bin}(n_0-m_0, p_{10}),\\
       & n\cdot S^{EA}(n_0, e=0\mid m_0\le n_0) \sim \mathrm{Bin}(m_0, M^b_{0}) + \mathrm{Bin}(n-n_0, M^b_{1}) + \mathrm{Bin}(n_0-m_0, M^b_{0}).
    \end{align*}

    Note that for any question, the probability of Alice observing 0 while exerting effort is the marginal distribution $\Pr(X_a=0)$, which is identical to the probability of observing 0 while exerting no effort.
    This implies that under our definition of effort, the probability of Alice observing 0 on $m_0$ questions does not depend on the effort.
    Let $\rho_0 = \sum_{j\in [n]} (1-X_{a,j})$ be the number of questions where Alice observes $0$.
    The probability that Alice observes 0 on $m_0$ questions is
    \begin{equation*}
        \Pr\left(\rho_0 = m_0\right) = \binom{n}{m_0} (M_0^a)^{m_0} (1-M_0^a)^{n-m_0}.
    \end{equation*}

    Consequently, when $m_0\le n_0$, Alice's score, conditioned on observing 0 in $m_0\le n_0$ questions (which occurs with probability $\Pr\left(\rho_0 = m_0\right)$), follows the distribution:
    \begin{align*}
        n\cdot S^{EA}(n_0, e\mid m_0\le n_0) \sim & \mathrm{Bin}\left(m_0, e\; p_{00} + (1-e)\; M^b_0\right) + \mathrm{Bin}\left(n-n_0, e\; p_{11} + (1-e)\; M^b_1\right) \\
        &+ \mathrm{Bin}\left(n_0-m_0, e\; p_{10} + (1-e)\; M^b_0\right).
    \end{align*}

    In the second case, where $m_0 > n_0$, the final score of Alice is the average of the following three type of questions:
    \begin{itemize}
        \item $n_0$ questions where Alice observes 0 and receives a score of 1 if Bob also reports 0.
        \item $n-m_0$ questions where Alice observes 1 and receives a score of 1 if Bob also reports 1.
        \item $m_0-n_0$ questions where Alice observes 0 but receives a score of 1 if Bob reports 1.
    \end{itemize}
    By the analogous arguments, we know that Alice's score, conditioned on observing 0 in $m_0> n_0$ questions, follows the distribution:
    \begin{align*}
        n\cdot S^{EA}(n_0, e\mid m_0 > n_0) \sim & \mathrm{Bin}\left(n_0, e\; p_{00} + (1-e)\; M^b_0\right) + \mathrm{Bin}\left(n-m_0, e\; p_{11} + (1-e)\; M^b_1\right)\\
        &+ \mathrm{Bin}\left(m_0-n_0, e\; p_{01} + (1-e)\; M^b_1\right).
    \end{align*}

    Putting everything together,
    \begin{equation}\label{eq:S_EA}
          S^{EA}(n_0, e) \sim \sum_{m_0=0}^{n_0} \Pr\left(\rho_0 = m_0\right)\cdot S^{EA}(n_0, e\mid m_0\le n_0) + \sum_{m_0=n_0+1}^{n} \Pr\left(\rho_0 = m_0\right)\cdot S^{EA}(n_0, e\mid m_0> n_0)
    \end{equation}

    Now, we show that the standard deviation of $S^{EA}(n_0, e)$ is independent of $n_0$.
    For simplicity, let $\eta_{ij} = e\; p_{ij} + (1-e)\; M^b_j$.
    Note that $\eta_{i0}+\eta_{i1} = e\; (p_{i0} + p_{i1}) + (1-e)\; (M^b_0 + M^b_1) = 1$.
    Furthermore, the variance of a binomial with parameters $n ,p$ is $n p(1-p)$.
    When $m_0\le n_0$, the variance of the final score is
    \begin{align*}
        \text{var}\left(n\cdot S^{EA}(n_0, e\mid m_0\le n_0)\right) &= m_0\;\eta_{00}\;(1-\eta_{00}) + (n-n_0)\;\eta_{11}\;(1-\eta_{11}) + (n_0-m_0)\;\eta_{10}\;(1-\eta_{10})\\
        &= m_0\;\eta_{00}\;(1-\eta_{00}) + (n-m_0)\;\eta_{11}\;(1-\eta_{11}).
    \end{align*}

    Similarly, when $m_0 > n_0$,
    \begin{align*}
        \text{var}\left(n\cdot S^{EA}(n_0, e\mid m_0> n_0)\right) &= n_0\;\eta_{00}\;(1-\eta_{00}) + (n-m_0)\;\eta_{11}\;(1-\eta_{11}) + (m_0-n_0)\;\eta_{01}\;(1-\eta_{01})\\
        &= m_0\;\eta_{00}\;(1-\eta_{00}) + (n-m_0)\;\eta_{11}\;(1-\eta_{11}).
    \end{align*}
    Therefore, conditioned on observing 0 on $m_0$ questions, the standard deviation $\text{std}\left(S^{EA}(n_0, e\mid m_0)\right) = \sqrt{m_0\;\eta_{00}\;(1-\eta_{00}) + (n-m_0)\;\eta_{11}\;(1-\eta_{11})}$ is independent of $n_0$.
    Consequently, the standard deviation of the final score, averaging over $m_0$, is also independent of $n_0$.
\end{proof}

\begin{proof}[Proof of \cref{prop:opt_enforce}]
    By \cref{lem:std_EA}, the standard deviation of the EA mechanism is independent of the enforcement $\Phi$.
    Therefore, the enforcement that maximizes the sensitivity is the one that maximizes the derivative of the expected score w.r.t.~effort $e$.
    By \cref{eq:S_EA}, the score given by EA can be expressed as the average of many binomial variables.
    Note that the expectation of a binomial variable with parameter $(n,p)$ is $n\cdot p$. 
    We thus have
    \begin{align*}
        &n\cdot\mathbb{E}\left[S^{EA}(n_0, e\mid m_0\le n_0)\right] = m_0\eta_{00} + (n-n_0)\;\eta_{11} + (n_0-m_0)\;\eta_{10}\\
        &n\cdot\mathbb{E}\left[S^{EA}(n_0, e\mid m_0> n_0)\right] = n_0\eta_{00} + (n-m_0)\;\eta_{11} + (m_0-n_0)\;\eta_{01},   
    \end{align*}
    where $\eta_{ij} = e\; p_{ij} + (1-e)\; M^b_j$.
    The derivative of $\eta_{ij}$ w.r.t.~$e$ is thus $\eta_{ij}' = p_{ij} - M^b_j$.

    We aim to find the $n_0$ that maximizes
    \begin{align*}
        \frac{\partial \mathbb{E}\left[S^{EA}(n_0,e)\right]}{\partial e} =& \sum_{m_0=0}^{n_0} \Pr\left(\rho_0 = m_0\right)\cdot \left(m_0\;\eta'_{00} + (n-n_0)\;\eta'_{11} + (n_0-m_0)\;\eta'_{10}\right)\\
        & + \sum_{m_0=n_0+1}^{n} \Pr\left(\rho_0 = m_0\right)\cdot\left(n_0\;\eta'_{00} + (n-m_0)\;\eta'_{11} + (m_0-n_0)\;\eta'_{01}\right).
    \end{align*}
    Let $n_0$ increase by 1:
    \begin{align*}
        \frac{\partial \mathbb{E}\left[S^{EA}(n_0+1,e)\right]}{\partial e} =& \sum_{m_0=0}^{n_0+1} \Pr\left(\rho_0 = m_0\right)\cdot \left(m_0\;\eta'_{00} + (n-n_0-1)\;\eta'_{11} + (n_0+1-m_0)\;\eta'_{10}\right)\\
        & + \sum_{m_0=n_0+2}^{n} \Pr\left(\rho_0 = m_0\right)\cdot\left((n_0+1)\;\eta'_{00} + (n-m_0)\;\eta'_{11} + (m_0-n_0-1)\;\eta'_{01}\right).
    \end{align*}
    Therefore, the marginal return of increasing $n_0$ is given by
    \begin{align*}
        d \nabla S^{EA}(n_0)&=\frac{\partial \mathbb{E}\left[S^{EA}(n_0+1,e)\right]}{\partial e} - \frac{\partial \mathbb{E}\left[S^{EA}(n_0,e)\right]}{\partial e}\\
        &= \sum_{m_0=0}^{n_0} \Pr\left(\rho_0 = m_0\right)\cdot \left(\eta'_{10} - \eta'_{11}\right)+ \sum_{m_0=n_0+1}^{n} \Pr\left(\rho_0 = m_0\right)\cdot\left(\eta'_{00} - \eta'_{01}\right).
    \end{align*}
    Note that 
    \begin{align*}
        \eta'_{10} - \eta'_{11} &= p_{10}-M^b_0 - p_{11}+M^b_1\\
        &= -2(p_{11}-M^b_1) \tag{$p_{01}=1-p_{11}$, and $M^b_0 = 1-M^b_1$.}\\
        &= -2(p_{11} - (M^a_0\; p_{01} + M^a_1\; p_{11}))\\
        &= -2(p_{11}\; (1-M^a_1) - p_{01}M^a_0)\\
        &= -2\;M^a_0\;(p_{11}-p_{01})\\
        &<0 \tag{By self-prediction.}
    \end{align*}
    Similarly, we can show that $\eta'_{00} - \eta'_{01}>0$.

    Therefore, when $n_0 = 0$, $d \nabla S^{EA}(0)>0$, and when  $n_0 = n$, $d \nabla S^{EA}(n)<0$.
    Furthermore, $d \nabla S^{EA}(n_0)$ decreases in $n_0$.
    The sensitivity-maximizing $n_0$ is thus the largest integer such that $d \nabla S^{EA}(n_0) > 0$.
    Let $F(k) = \sum_{m_0=0}^{k} \Pr\left(\rho_0 = m_0\right)$.
    \begin{align*}
        &d \nabla S^{EA}(n_0) > 0\\
       \Leftrightarrow \quad& F(n_0)\left(\eta'_{10} - \eta'_{11}\right) + (1-F(n_0)) \left(\eta'_{00} - \eta'_{01}\right) >0\\
       \Leftrightarrow \quad& F(n_0) < \frac{\eta'_{00} - \eta'_{01}}{\eta'_{00} - \eta'_{01} + \eta'_{11} - \eta'_{10}}.
    \end{align*}
    
\end{proof}

\end{document}